\documentclass[11pt]{amsart}

\usepackage{mathrsfs}
\usepackage{amsfonts}
\usepackage{amssymb}
\usepackage{amsxtra}
\usepackage{dsfont}
\usepackage[dvipsnames]{xcolor}
\usepackage[english,polish]{babel}
\usepackage{enumerate}
\usepackage[shortlabels]{enumitem}
\usepackage{todonotes}

\usepackage{tikz}
\usetikzlibrary{arrows}

\usepackage[compress, sort]{cite}

\usepackage[T1]{fontenc}

\usepackage{newtxmath}
\usepackage[type1]{newtxtext}

\usepackage[margin=2.5cm, centering]{geometry}
\usepackage[colorlinks,citecolor=blue,urlcolor=blue,bookmarks=true]{hyperref}
\hypersetup{
pdfpagemode=UseNone,
pdfstartview=FitH,
pdfdisplaydoctitle=true,
pdfborder={0 0 0}, 
pdftitle={Spectral transitions in some Rabi models},
pdfauthor={Grzegorz Świderski and Lech Zieliński},
pdflang=en-US
}

\newcommand{\CC}{\mathbb{C}}
\newcommand{\ZZ}{\mathbb{Z}}

\newcommand{\NN}{\mathbb{N}}
\newcommand{\RR}{\mathbb{R}}

\newcommand{\calG}{\mathcal{G}}

\newcommand{\calD}{\mathcal{D}}
\newcommand{\calH}{\mathcal{H}}

\newcommand{\frakB}{\mathfrak{B}}
\newcommand{\frakX}{\mathfrak{X}}

\newcommand{\fraku}{\mathfrak{u}}
\newcommand{\frakt}{\mathfrak{t}}

\newcommand{\frakS}{\mathfrak{S}}

\newcommand{\sign}[1]{\operatorname{sign}({#1})}

\newcommand{\tr}{\operatorname{tr}}

\newcommand{\cl}[1]{\operatorname{cl}({#1})}

\newcommand{\sigmaEss}{\sigma_{\mathrm{ess}}}
\newcommand{\sigmaP}{\sigma_{\mathrm{p}}}
\newcommand{\sigmaSC}{\sigma_{\mathrm{sc}}}
\newcommand{\sigmaAC}{\sigma_{\mathrm{ac}}}

\newcommand{\sigmaD}{\sigma_{\mathrm{discr}}}

\newtheorem{theorem}{Theorem}[section]

\newtheorem{proposition}[theorem]{Proposition}

\theoremstyle{plain}
\newcounter{thm}

\numberwithin{equation}{section}

\theoremstyle{definition}

\newtheorem{Corollary}[theorem]{Corollary}

\title{Spectral transitions in some Rabi models}

\author{Grzegorz \'{S}widerski}
\address{ Grzegorz \'{S}widerski\\ Faculty of Pure and Applied Mathematics\\
 Wroc\l{}aw University of Science and Technology\\ Wyb. Wyspiańskiego 27\\50-370 Wroc\l{}aw, Poland}
\email{grzegorz.swiderski@pwr.edu.pl}

\author{Lech Zieli\'{n}ski}
\address{Lech Zieli\'{n}ski\\ Laboratoire de Mathématiques Pure et Appliquées\\
    Centre Universitaire de la Mi-Voix\\ 50 rue Ferdinand Buisson\\ CS 80699\\
    62228 CALAIS Cedex, France}
\email{Lech.Zielinski@univ-littoral.fr}

\subjclass[2020]{81V80, 81Q10, 47B36} 
\keywords{Quantum Rabi model, spectral transition, 
spectral collapse, Jacobi operators, subordinacy theory}

\begin{document} 
\selectlanguage{english} 

 \begin{abstract}  
We consider the transition from discrete to continuous spectrum which occurs in some quantum Rabi models. We present recent developments 
of the subordinacy theory and their applications 
to study the spectral properties of 
the intensity-dependent Rabi model, the anisotropic two-photon Rabi model, and the two-photon Rabi-Stark model for the whole range of parameters. 
Our analysis allows us to detect and locate the 
continuous spectrum. Moreover, we prove 
the absence of eigenvalues in the interior of the continuous spectrum and the absence of a singular spectrum for all models considered.

\end{abstract} 
\maketitle

\section{Introduction} \label{sec:0}
 The simplest model of interaction between light and matter was proposed in \cite{rabi1936process, rabi1937space} by I. I. Rabi. Its fully quantized version is called the  quantum Rabi model (QRM). The QRM couples a two-level system with quantized single-mode radiation and  was used to study real atoms in quantum optics, e.g. to experimentally test the field energy quantization of the cavity electrodynamics (see \cite{Bru}). 
   We refer to the survey \cite{xie2017quantum} for a list of research work and experimental realizations of the QRM and its generalizations that describe various  quantum devices. 
  \par 
   One of the generalizations is the two-photon QRM, 
   which describes the situation when the change of level is associated with the absorption/emission of two photons instead of one photon. 
 The corresponding two-photon QRM   was applied to describe a two-level atom interacting with squeezed light
(see  \cite{Ber, G}), quantum dots inserted in a cavity (see \cite{Va, Ota, Stufler}), trapped ions experiments (see \cite{Feli, Puebla}), and superconducting circuits (see \cite{Feli1, Feli2}). The two-photon QRM  
gave rise to many research works 
(see \cite{Braak:det, NG, Duan2016, Maciej, Penna, Lo!}). 
The great interest in this model was motivated by 
 the so-called spectral collapse phenomenon, i.e., the 
 transition from discrete to continuous spectrum that occurs when 
 the coupling 
 constant $g$ reaches a critical value $g_{\rm cr}$. 
 In particular, the spectrum of the two-photon QRM is discrete if $0<g<g_{\rm cr}$, but the spacing of the eigenvalues shrinks as $g$ approaches $g_{\rm cr}$ 
(see  \cite{Braak:det, Duan2016, Lo2021'}). 
A half-line of the continuous spectrum appears 
if $g=g_{\rm cr}$ (see \cite {Lo!}) and the spectrum 
becomes the whole real line if $g>g_{\rm cr}$ 
(see \cite{Braak:det}). 
This phenomenon was also studied numerically (see, e.g.  
 \cite{Polaron, Hammani, Lupo, NG, Armenta}) and  
  experimentally 
(see, e.g. \cite{Feli, Felicetti_2019}).  
Spectral transitions also occur 
for the anisotropic two-photon Rabi model 
(see \cite{Cui, Li-ani, Li_26, Lo-ani}), 
Rabi-Stark models  
(see \cite{ Johan, Braak-Stark, Chen-Stark, 
Li_25, 
 Li-Stark, Maciej-Stark, Xie-Chen, XieStark, Xie-Stark20, Yan}) 
and other generalizations of the two-photon Rabi model 
(see \cite{Duan2015, Duan_2022, Lo2020', Lo2021''',  Lo2021'', Lo2022, mixed, Ying, Ying_22, Ying_25}). We note that spectral transitions have recently 
 attracted  attention in connection with 
 the quantum metrology (see \cite{Ying_22, Ying_25}).  
\par  
In this paper, we consider Rabi models which can 
be expressed as a direct sum of operators 
defined by symmetric Jacobi matrices, 
i.e. symmetric 
tridiagonal matrices acting in the Hilbert space 
of square-summable sequences $\ell^2(\NN_0)$ 
(see Section \ref{sec:1''}). 
The spectral properties of the corresponding operators have been  the subject of numerous mathematical works. 
Our purpose is to present basic notions of the spectral theory and to apply recent developments of the subordinacy theory for periodically modulated Jacobi matrices. 
We note that this theory is based on the asymptotic 
analysis of the transfer matrix and was applied 
to investigate the spectral transitions in the fundamental 
work \cite{JANAS}. 
We have chosen to characterize the spectrum using 
the mathematical notions introduced in 
Section \ref{sec:1} and we have decided to avoid 
using the terms "bound states" and "the spectral collapse",   which require additional clarifications.  Our main results are presented in Section \ref{sec:2}. 
The  intensity-dependent Rabi model (see \cite{NG2000463, Rodriguez-Lara:14}) 
 is considered in Theorem \ref{thm:21}.  
 The two-photon Rabi model is considered 
 in Theorem \ref{thm:22} and the two-photon anisotropic Rabi model in Theorem \ref{thm:23}. Finally, 
in Theorem \ref{thm:24} we consider 
the two-photon Rabi-Stark model.\, 
The general scheme of our approach is described in 
Section \ref{sec:3}. In Section \ref{sec:4} we give 
the proofs of Theorems \ref{thm:21}-\ref{thm:24}. In Section \ref{sec:6} 
we describe the most important aspects of our research.

\section{Preliminaries} \label{sec:1} 
\subsection{Basic notions and notations} 
\label{sec:1'} 
Let $\calH$ be a complex Hilbert space equipped with 
the scalar product $\langle \cdot ,\cdot \rangle$ and let $H: \, \calD (H)\to \calH $ be a linear map defined on a dense subspace of $\calH$. 
The spectrum of $H$ is defined by the formula 
$\sigma (H)=\CC \setminus \rho (H)$, where $\rho (H)$ is the set of complex numbers $\lambda$ such that 
$(H-\lambda I)^{-1}$ exists and is a bounded operator on $\calH$. We say that $\lambda \in \CC$ 
is an eigenvalue of $H$ if the eigenspace 
$E_\lambda (H):=\{ x\in \calD (\calH): 
Hx=\lambda x\}$ is not equal to $\{0\}$ 
and  the multiplicity of $\lambda$ is 
defined as the dimension of $E_\lambda (H)$. 
The point spectrum of $H$ is  $$\sigma_{\rm p}(H)=
\{ \lambda \in \CC : \lambda \hbox{ is an eigenvalue of } H\}, $$
 the discrete spectrum of $H$ is
 $$\sigma_{\rm  discr}(H)=
\{ \lambda \in \sigma_{\rm p}(H) : 
\dim E_{\lambda}(H)<\infty \hbox{ and } 
\lambda \hbox{ is isolated from } 
\sigma (H)\setminus \{\lambda \} \} $$ 
 and the essential spectrum, 
$\sigmaEss (H)= \sigma (H)\setminus 
\sigma_{\rm discr}(H)$.  \par 
We say that $H$ closed if and only if its graph  
$\calG (H)=\{ (x,Hx)\in \calH \times \calH : 
x\in \calD (H) \}$ is closed in $\calH \times \calH$. 
If $\hat H: \, \calD (\hat H)\to \calH $ is a linear map defined on a dense subspace of $\calH$ and $\hat H$ is symmetric 
(i.e.  $\langle \hat Hx,y\rangle =\langle x,\hat H y\rangle$ for $x,y\in \calD (\hat H)$), then the closure of its graph in $\calH \times \calH$ is the graph of the symmetric operator called the closure of $\hat H$ (see, e.g. \cite{Schm}, Section 3.1). 
\par 
In what follows, $\NN_0$ is the set of non-negative 
integers, $\ell^2(\NN_0)$ is the complex 
Hilbert space of square-summable complex sequences 
$(x_n)_{n\in \NN_0 }$ 
equipped with the scalar product 
  \begin{equation} \label{not1a}
  \langle {x,y} \rangle_{\ell^2(\NN_0)}  =\sum_{n=0}^\infty \, 
  x_n {\overline {y_n}} 
  \end{equation} 
and $\ell^2_{\rm{fin}}(\NN_0)$ is the vector 
subspace of $\ell^2(\NN_0)$ composed of finite linear combinations of vectors from the canonical basis 
$\{ e_n\}_{ n\in \NN_0}$. We identify the Fock space 
with $\ell^2(\NN_0)$ and define the Hamiltonian of 
single-mode radiation, $\hat N$, 
as the closed linear operator in  $\ell^2(\NN_0)$ satisfying 
   \begin{equation}\label{not1d} 
  \hat N e_n  =n e_n \hbox{ for } n\in \NN_0 
\end{equation} 
We define the annihilation and  creation  operators, $\hat{a}$ and $\hat{a}^{\dagger}$, as linear operators 
$\calD(\hat N^{1/2})\to \ell^2(\NN_0)$ such that
   \begin{equation}\label{not1c}
  \hat{a}^{\dagger}\, e_n = \sqrt{n+1} \, e_{n+1} \hbox{\, for } n\in \NN_0 
  \end{equation} 
     \begin{equation}\label{not1c'} 
   \hat{a} \, e_0=0\hbox{ \, and \, }  
    \hat{a} \, e_n = \sqrt{n} \, e_{n-1} 
    \hbox{\, for }   n\in \NN_0 \setminus \{0\} 
   \end{equation} 
We note that $\hat N =\hat a^\dagger \hat a$.\, 
We also consider the linear operators  defined 
in $\CC^2$ by the matrices 
   $$
    I_2 :=\begin{pmatrix} 1 & 0 \\ 0 & 1 \end{pmatrix} ,\hspace{5mm} 
   \sigma_{x} :=\begin{pmatrix} 0 & 1 \\ 1 & 0 \end{pmatrix} ,\hspace{5mm} 
   \sigma_{z} :=\begin{pmatrix} 1 & 0 \\ 0 & -1 \end{pmatrix} ,\hspace{5mm} 
    \sigma_{+} :=\begin{pmatrix} 0 & 0 \\ 1 & 0 \end{pmatrix} ,\hspace{5mm} 
    \sigma_{-} :=\begin{pmatrix} 0 & 1 \\ 0 & 0 \end{pmatrix} 
   $$   
and we denote
\begin{equation} 
e^{1}:=\begin{pmatrix}
    1\\ 0
\end{pmatrix} \hspace{9mm} 
e^{-1}:=\begin{pmatrix}
    0\\ 1
\end{pmatrix}     
\end{equation} 
If $\nu =\pm 1$ and $g_+$, $g_-$ are real parameters, then we obtain  
\begin{equation} 
\sigma_x e^{\nu} = e^{-\nu}, \hspace{9mm} 
\sigma_z e^\nu =\nu e^\nu, \hspace{9mm} 
(g_- \sigma_- +g_+ \sigma_+ )e^{\pm 1} = g_\pm e^{\mp 1}\, .
\end{equation} 

In Section \ref{sec:2} we describe Rabi models defined by 
self-adjoint operators in the Hilbert space 
$\CC^2\otimes \ell^2(\NN_0)$ equipped with the 
canonical orthonormal basis 
$\{ e^\nu_n\}_{(\nu ,n)\in \{-1,1\} \times \NN_0}$  
of the form 
\begin{equation} \label{enun}
e^{\nu}_n :=e^\nu \otimes e_n .
\end{equation} 

\subsection{Jacobi operators} \label{sec:1''}
Assume that $(a_n)_{n\in \NN_0}$ and $(b_n)_{n\in \NN_0}$ 
are two sequences of positive and real numbers, respectively. Then  
the infinite tridiagonal symmetric matrix
\begin{equation}\label{Jacobimatrix} 
	\mathcal J ((a_n),(b_n)) = 
    \begin{pmatrix}
        b_0	& a_0	&	0	&       \\
        a_0 & b_1	& a_1	&         \\
         0   & a_1	& b_2 	& \ddots   \\
            &		&		\ddots &  \ddots 
    \end{pmatrix}
\end{equation} 
is called \emph{Jacobi matrix}. We let 
$\hat J((a_n),(b_n))$ denote the 
linear map  in 
$\ell^2_{\rm fin}(\NN_0)$,  defined by 
using complex valued sequences as column vectors, i.e. 
\begin{equation} \label{J}
\hat J((a_n),(b_n))e_n =a_ne_{n+1}+a_{n-1}e_{n-1}
+b_ne_n, 
\end{equation} 
where $a_je_j:=0$ when $j<0$. The map  
$\hat J((a_n),(b_n))$ is symmetric in 
the Hilbert space $\ell^2(\NN_0)$ 
and its closure defines the Jacobi operator 
$J((a_n),(b_n))$. \par 
Assume that the Jacobi operator $J=J((a_n),(b_n))$ is 
self-adjoint $\ell^2(\NN_0)$. It is well known   (see Section 2.5 in \cite{Teschl}) that 
$e_1$ is cyclic (i.e. span$\{ J^k e_1: k\in \NN_0 \}$ 
is dense in $\ell^2(\NN_0)$) and the operator 
$J$ is unitary similar to the operator of multiplication 
by $\lambda$ in $L^2(\RR ,d\rho (\lambda ))$ where 
$d\rho$ is a Borel measure on $\RR$. The measure 
$d\rho$ is determined by the conditions that 
its support is $\sigma (J)$  and  
 $$ 
\langle e_1, J^k e_1\rangle =
\int_{\RR} \lambda^k \, d\rho (\lambda ) 
\hbox{ for every } k\in \NN_0. 
$$ 
Consider the Lebesgue decomposition 
$d\rho =d\rho_{\rm {pp}}  +d\rho_{\rm{ac}} +
d\rho_{\rm{sc}}$, 
where pp, ac, and sc refer to the pure point, 
absolutely continuous, and singularly continuous 
part of the measure $d\rho$ with respect to Lebesgue 
measure. Then the pure point spectrum  
$\sigma_{\rm {pp}} (J)$, the absolutely continuous spectrum $\sigma_{\rm {ac}} (J)$ and the singular continuous spectrum  
$\sigma_{\rm {sc}} (J)$ are defined 
as the support of $d\rho_{\rm {pp}}$, 
$d\rho_{\rm{ac}}$ and $d\rho_{\rm{sc}}$ 
respectively. We note that $\sigma_{\rm {pp}}(J)$ is the closure of $\sigma_{\rm {p}}(J)$.
\par 
In what follows, we say that the self-adjoint operators 
$H$ and $H'$ are unitarily similar if and only if 
$H'=U^{-1}HU$ where $U$ is an isometric bijection 
between two Hilbert spaces. We will investigate Rabi models defined by a self-adjoint 
operator $H$ unitarily similar to a direct sum  $J_1 \oplus \dots \oplus J_l$, where 
$\{ J_k\}_{1\le k\le l}$ is a finite family of Jacobi operators. In this case 
$\sigma (H)=\sigma (J_1)\cup \dots \cup \sigma (J_l)$ and similar equalities  
remain valid if $\sigma$ is replaced by $\sigma_{\rm {pp}}$, $\sigma_{\rm {ac}}$ or 
$\sigma_{\rm {sc}}$ (see \cite{Schm}, Section 9.1). 

\section{Results} \label{sec:2}
\subsection{Intensity-dependent Rabi model} \label{sec:21} 
Assume $\kappa \ge 0$,\, $g>0$,  
$\Delta \in \RR$, and consider  the linear map in 
$ \CC^2 \otimes \ell^2_{\rm fin}(\NN_0)$ given by the formula
  \begin{equation}\label{21} 
 \hat H = I_2 \otimes \hat N 
 +\frac{\Delta}{2}\, \sigma_z \otimes I + 
  g \sigma_x \otimes \left( {(\hat N +2\kappa )^{1/2} \, \hat a +\hat a^\dagger \, (\hat N +2\kappa )^{1/2} }\right) 
  \end{equation} 
The operator $\hat H$ is symmetric in the Hilbert space 
$\CC^2 \otimes \ell^2(\NN_0)$ and  $H$, 
the Hamiltonian of the intensity-dependent QRM, is defined as   
the closure of $\hat H$ in $\CC^2 \otimes \ell^2(\NN_0)$. It is easy to see 
that the subspaces $\calH^+$ and 
$\calH^-$ spanned by $\{ e^1_0,e^{-1}_1,e^{1}_2,e^{-1}_3,
\dots \}$ and $\{ e^{-1}_0,e^{1}_1,e^{-1}_2,e^{1}_3,
\dots \}$ are invariant for  $H$ 
(see the action of $H$ on 
$e^{\pm 1}_n$ described in Section  
\ref{sec:41}) and   
$H$ can be written as a direct sum of two 
operators that can be expressed by 
Jacobi matrices. 
This decomposition is  
usually called {\it the parity decomposition}  and, it allows us to deduce the  
the properties of the spectrum of $H$ 
from the properties of corresponding 
Jacobi operators. 
\begin{theorem} \label{thm:21} 
The operator $H$ is unitarily similar to the direct sum \begin{equation} \label{decomp1}
J^-\oplus J^+ , 
\end{equation} 
where 
$J^\pm =J((a_n ),(b_n^\pm ))$ is the Jacobi operator with 
\begin{equation} \label{thm1} \begin{cases}
\hspace{2mm}  a_n = g \sqrt{(n+1)(n+ 2\kappa )},\\ 
\hspace{2mm}  b_n^{\pm} = n \pm (-1)^n \, \frac{\Delta}{2}
\end{cases} 
\end{equation} 
The operator $J^\pm$ is self-adjoint in 
$\ell^2(\NN_0)$ and 
\begin{equation} \label{thm1a}
    \sigma(J^\pm ) =\sigma_{\rm discr}(J^\pm ) 
    \hbox{ \, when } 
    0<g<\tfrac{1}{2}
\end{equation} 
\begin{equation} \label{thm1c}
\RR =\sigma(J^\pm )=  \sigmaAC(J^\pm ) \hbox{\, when } g>\tfrac{1}{2}
\end{equation} 
\begin{equation} \label{thm1b} 
[-\kappa , \infty )   =
    \sigmaAC(J^\pm )
= \sigma(J^\pm )\setminus 
\sigma_{\rm discr}(J^\pm )
  \hbox{\, when } g=\tfrac{1}{2}.
\end{equation} 
Moreover, in all cases, $\sigmaSC(J^\pm) = \emptyset$, and $\sigmaP(J^\pm)$ is disjoint from the interior of $\sigmaAC(J^\pm)$. 
\end{theorem}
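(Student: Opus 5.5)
Compute the action of $H$ on the basis vectors. We have $(\hat N+2\kappa)^{1/2}\hat a\, e_n=\sqrt{n(n-1+2\kappa)}\,e_{n-1}$ and $\hat a^\dagger(\hat N+2\kappa)^{1/2}e_n=\sqrt{(n+1)(n+2\kappa)}\,e_{n+1}$. Hence
\[
H e^\nu_n = \Bigl(n+\nu\tfrac{\Delta}{2}\Bigr)e^\nu_n + g\sqrt{(n+1)(n+2\kappa)}\,e^{-\nu}_{n+1}+g\sqrt{n(n-1+2\kappa)}\,e^{-\nu}_{n-1}.
\]
Define the unitary maps $U^\pm:\ell^2(\NN_0)\to\calH^\pm$ by $U^\pm e_n=e^{\pm(-1)^n}_n$. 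In these bases the restriction of $\hat H$ is exactly $\hat J((a_n),(b^\pm_n))$ with $a_n,b_n^\pm$ as in (\ref{thm1}), so $H=U(J^-\oplus J^+)U^{-1}$. When $\kappa=0$ we get $a_0=0$; this case splits off the one-dimensional block $e_0$, and I would handle it separately (it only adds the eigenvalue $\pm\Delta/2$). For self-adjointness, $a_n\sim gn$, so $\sum 1/a_n=\infty$ and Carleman's criterion applies.

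For the spectral statements I would use periodically modulated Jacobi theory with period $N=2$. Write $a_n=g\,\tilde a_n$ with $\tilde a_n=\sqrt{(n+1)(n+2\kappa)}=n+\kappa+\tfrac12+O(1/n)$. Then
\[
\frac{b_n}{a_n}=\frac{1}{g}\Bigl(1-\frac{\kappa+\tfrac12\mp(-1)^n\tfrac{\Delta}{2}}{n}+O(n^{-2})\Bigr).
\]
This is the asymptotically periodic case $\alpha_n\equiv 1$, $\beta_n\equiv 1/g$. The relevant quantity is the trace of the limiting period-one transfer matrix, $\tr\mathfrak X_0(0)=-2\beta/\alpha=-2/g\cdot$(sign convention); over two steps it becomes $\bigl(\beta^2-2\bigr)$-type. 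What matters is the comparison $|\beta|$ versus $2$, i.e.\ $g$ versus $\tfrac12$, since $b_n/(2a_n)\to 1/(2g)$.

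If $g<\tfrac12$, then $b_n-2a_n\sim(1-2g)n\to\infty$. The operator is bounded below by $b_n-a_n-a_{n-1}\to\infty$, so it has compact resolvent and the spectrum is discrete.

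If $g>\tfrac12$, the limiting transfer matrix is elliptic. The perturbation has the form $1/n$ plus a summable $O(n^{-2})$ term, and the $1/n$ part has bounded variation in the period-2 sense. I would then invoke the bounded-variation subordinacy results (Świderski–Trojan, Janas–Naboko type): the transfer matrices $T_n(\lambda)$ are uniformly comparable to $\prod a_k^{-1/2}$-normalized bounded solutions for $\lambda$ in compact subsets of $\RR$. This gives no subordinate solutions, hence $\sigma=\sigma_{ac}=\RR$, with no point or singular spectrum.

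The critical case $g=\tfrac12$ is the main obstacle. Here $\beta=2$ lies at the parabolic boundary, and the generalized eigenvector behavior is decided by the $1/n$ correction. I would set $\lambda$ in and form the quantity
\[
\frac{b_n-\lambda}{a_n}-2 \approx -\frac{2\bigl(\kappa+\tfrac12\mp(-1)^n\tfrac{\Delta}{2}\bigr)+\ldots-2\lambda}{n}.
\]
After averaging over the period, and including the $a_{n-1}$ correction, this becomes $4(\lambda+\kappa)/n$ up to sign. The $\Delta$-oscillation averages to zero over two steps but contributes at second order.

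I would then apply the critical-case result for periodically modulated Jacobi matrices in the Jordan-block regime, with $a_n\asymp n$ and $\gamma$-type condition. It yields absolute continuity with no eigenvalues where the discriminant quantity is positive, i.e.\ $\lambda>-\kappa$, and discreteness of the spectrum below $-\kappa$. The delicate points are computing the exact constant, which must come out as threshold $-\kappa$ with the $\Delta$-term cancelling, and verifying the required regularity hypotheses on the sequences, namely bounded variation of $(a_{n+1}-a_n)$ and $(b_n/a_n)$ along even and odd subsequences. Closedness of the AC spectrum gives $[-\kappa,\infty)$. Finally, $\sigma_{sc}=\emptyset$ in all cases follows from the same theorems, as does the absence of embedded eigenvalues in the interior.
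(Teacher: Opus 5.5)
Your proposal is correct in outline for $\kappa>0$. Apart from one case it follows the paper: the bases $e^{\pm(-1)^n}_n$, Carleman's criterion, and the $2$-periodic modulation with $\tr\frakX_0(0)=g^{-2}-2$. Your period-one trace $-2\beta/\alpha$ should be $-\beta/\alpha=-1/g$, but the dichotomy $g$ versus $\tfrac12$ you draw is right. You then use Theorem~\ref{thm:perI} for $g>\tfrac12$ and Theorem~\ref{thm:perIIb} for $g=\tfrac12$, as the paper does.

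The genuine difference is $g<\tfrac12$. There the paper invokes Theorem~\ref{thm:perIII}, while you use the bound $\langle Jx,x\rangle\ge\sum_n(b_n-a_n-a_{n-1})|x_n|^2$ on the core $\ell^2_{\rm fin}(\NN_0)$, with $b_n-a_n-a_{n-1}=(1-2g)n+O(1)$, and then min--max. This argument is valid and fully elementary. What the paper's transfer-matrix criterion buys is uniformity: it also handles the regimes $\tr\frakX_0(0)<-2$ (e.g.\ $|g'|>\tfrac12$ in Theorem~\ref{thm:23}, $|\kappa|>1$ in Theorem~\ref{thm:24}). In those regimes the operator is not semibounded, so no form bound is available.

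In the critical case you leave the decisive constant to an averaging heuristic, whereas the paper simply evaluates \eqref{eq:3}.
\begin{itemize}
\item With $\alpha_n\equiv\tfrac12$ and $\beta_n\equiv1$, Proposition~\ref{prop:1} (with $t=1$, $s=2\kappa$) gives $s_n\equiv\tfrac12$ and $r^\pm_n=\kappa+\tfrac12\mp(-1)^n\tfrac{\Delta}{2}$.
\item Using $\varepsilon=1$, $1-[\frakX_i(0)]_{1,1}=2$ and $[\frakX_i(0)]_{2,1}=2$, this yields $\tau^\pm(x)=4-4(2x+r^\pm_0+r^\pm_1)=-8(x+\kappa)$.
\item The $\Delta$-terms enter only through $r^\pm_0+r^\pm_1$ and cancel exactly, so no second-order effect has to be controlled.
\end{itemize}
The regularity required by Theorem~\ref{thm:perIIb} is $(a_n-a_{n-1}),(2a_n-b_n),(a_n^{-1/2})\in\calD_1^2$ for the unnormalized sequences. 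This is stronger than $(b_n/a_n)\in\calD_1^2$. It holds because $\sqrt{(n+1)(n+2\kappa)}-n$ is monotone and bounded.

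Your remark on $\kappa=0$ catches something the paper passes over, since Proposition~\ref{prop:1} needs $s=2\kappa>0$. However, the split-off block is not harmless. At $\kappa=0$, $e^{\pm1}_0$ is an eigenvector of $H$ with eigenvalue $\pm\tfrac{\Delta}{2}$.
\begin{itemize}
\item For $g>\tfrac12$ this eigenvalue lies in the interior of $\sigmaAC(J^\pm)=\RR$.
\item For $g=\tfrac12$ it lies in the interior of $\sigmaAC(J^\pm)=[0,\infty)$ whenever $\pm\Delta>0$.
\end{itemize}
So at $\kappa=0$ the assertion that $\sigmaP(J^\pm)$ avoids the interior of $\sigmaAC(J^\pm)$ is false. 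Both your argument and the paper's establish the theorem only for $\kappa>0$.
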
  
The statement of Theorem \ref{thm:21} ensures  
the following information on the spectral 
transition of the model: 
if $0<g<\frac 12$ then 
the whole spectrum of $H$ is 
discrete; if $g>\frac 12$ then the spectrum of $H$
is $\RR$ and $H$ has no eigenvalues; 
in the critical case $g=\frac 12$  the  spectrum of $H$ consists of 
the  half-line $[-\kappa , \infty )$ 
with no eigenvalues in $(-\kappa , \infty )$ and 
a possible discrete spectrum in $(-\infty ,-\kappa )$.

\subsection{Two-photon Rabi model} \label{sec:22} 
Assume $g>0$,  
$\Delta \in \RR$, and consider the linear map in 
$ \CC^2 \otimes \ell^2_{\rm fin}(\NN_0)$ given by
  \begin{equation}\label{22} 
  \hat H = I_2 \otimes \hat N +
  \frac{\Delta}{2}\, \sigma_z \otimes I +
  g \sigma_x \otimes \left( { \hat a^2 + 
  ({\hat a}^\dagger )^2 }\right) .  
  \end{equation} 
The operator $\hat H$ is symmetric in the Hilbert space 
$\CC^2 \otimes \ell^2(\NN_0)$ and  $H$, 
the Hamiltonian of the two-photon QRM, is defined as   
the closure of $\hat H$ in $\CC^2 \otimes \ell^2(\NN_0)$. Due to the definition of 
$\hat a^\dagger$ and $\hat a$ given in 
\eqref{not1c} and \eqref{not1c'}, we get 
 \begin{equation} \label{320}
  \big( \hat{a}^{\dagger}\big)^2\, e_n = \sqrt{(n+2)(n+1)} \, e_{n+2}, 
\hspace{8mm}  
\hat{a}^2 \, e_n = \sqrt{n(n-1)} \, e_{n-2} 
    \hbox{\, if }   n\ge 2,  
    \hspace{8mm}  
\hat{a}^2 \, e_n =0
    \hbox{\, if }   n\ge 1.  
 \end{equation} 
Due to \eqref{320}, the  
subspaces $\calH_0$ and $\calH_1$ spanned by 
$\{ e^\nu_{2n}\}_{(n,\nu )\in \NN_0 \times \{ -1,1\} }$ and 
$\{ e^\nu_{2n+1}\}_{(n,\nu )\in \NN_0 \times \{ -1,1\} }$ are invariant for $H$. 
Thus $H$ can be written as a direct sum 
of two operators usually called {\it the 
decomposition with respect to the 
Bargmann index} $q=\frac 14$ and 
$q=\frac 34$.
If $\mu =0$ 
or 1, then an additional parity decomposition $\calH_\mu =
\calH^+_\mu \oplus \calH^-_\mu$ with 
$\calH^+_\mu$ and $\calH^-_\mu$ 
spanned by 
$\{ e_{2n+\mu}^{(-1)^n} \}_{n\in \NN_0}$ and 
$\{ e_{2n+\mu}^{(-1)^n} \}_{n\in \NN_0}$,  
allows us to write $H$ as a direct sum of 
four  
operators that can be expressed by 
Jacobi matrices and the properties of the spectrum of $H$ 
can be deduced 
from the properties of the corresponding 
four Jacobi operators.

\begin{theorem} \label{thm:22} 
The operator $H$ is unitarily similar to the direct sum 
\begin{equation} \label{decomp2}
J_0^-\oplus J_0^+ \oplus J_1^- \oplus J_1^+  
\end{equation} where  
$J_\mu^\pm =J((a_{\mu ,n} ),(b_{\mu ,n}^\pm ))$ are the Jacobi operators with  
\begin{equation} \label{thm2} \begin{cases}
\hspace{2mm}  
a_{\mu ,n} = g\sqrt{(2n+1+\mu )(2n+2+\mu )},\\ 
\hspace{2mm}  b_{\mu ,n}^{\pm} = 2n+\mu \pm (-1)^n \, \frac{\Delta}{2}
 \end{cases}  
\end{equation} 
The operator $J^\pm_\mu$ is self-adjoint in $\ell^2(\NN_0)$ and 
\begin{equation} \label{thm2a} 
\sigma(J_\mu^\pm ) =\sigma_{\rm discr}(J_\mu^\pm )  
    \hbox{ \, when } 0<g<\tfrac{1}{2}
\end{equation} 
\begin{equation} \label{thm2b}
\RR =\sigma(J_\mu^\pm )= \sigmaAC(J_\mu^\pm ) \hbox{\, when } g>\tfrac{1}{2}
\end{equation} 
\begin{equation} \label{thm2c} 
[-\tfrac 12 , \infty ) =
    \sigmaAC(J_\mu^\pm )
= \sigma(J_\mu^\pm )\setminus 
\sigma_{\rm discr}(J_\mu^\pm )  
  \hbox{\, when } g=\tfrac{1}{2}
\end{equation} 
Moreover, in all cases, $\sigmaSC(J_{\mu}^\pm) = \emptyset$, and $\sigmaP(J_{\mu}^\pm)$ is disjoint from the interior of $\sigmaAC(J_{\mu}^\pm)$.  
\end{theorem}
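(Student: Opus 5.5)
The plan has two stages: first an explicit unitary reduction of $H$ to four Jacobi operators, then a spectral analysis of these operators as periodically modulated Jacobi matrices with period $N=2$.

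\textbf{Reduction.} Using \eqref{320}, for $\nu=\pm1$ we compute
\[
\hat H e^\nu_{m} = \bigl(m+\tfrac{\nu\Delta}{2}\bigr)e^\nu_m + g\sqrt{(m+1)(m+2)}\,e^{-\nu}_{m+2} + g\sqrt{m(m-1)}\,e^{-\nu}_{m-2}.
\]
Hence, for fixed $\mu\in\{0,1\}$ and sign $\pm$, the vectors $f_n:=e^{\pm(-1)^n}_{2n+\mu}$ satisfy
\[
\hat H f_n=a_{\mu,n}f_{n+1}+a_{\mu,n-1}f_{n-1}+b^\pm_{\mu,n}f_n ,
\]
with the coefficients of \eqref{thm2}. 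The four families $\{f_n\}$ together form a reordering of the canonical basis. The map sending $e_n\in\ell^2(\NN_0)$ to $f_n$ is therefore unitary and intertwines $\hat J$ with $\hat H$ on finite vectors. Taking closures gives \eqref{decomp2}. Self-adjointness follows from Carleman's condition, since $a_{\mu,n}\sim 2gn$ and so $\sum 1/a_{\mu,n}=\infty$.

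\textbf{Spectral analysis via transfer matrices.} Write $a_n=2gn+g(2\mu+\tfrac32)+O(1/n)$ and $b_n=2n+\mu\pm(-1)^n\Delta/2$. The ratio $b_n/a_n\to 1/g$, and $a_{n-1}/a_n\to1$. I would study the two-step transfer matrix
\[
X_n(\lambda)=B_{2n+1}(\lambda)B_{2n}(\lambda), \qquad B_k(\lambda)=\begin{pmatrix}0&1\\ -a_{k-1}/a_k & (\lambda-b_k)/a_k\end{pmatrix}.
\]
Its limit is $X=\begin{pmatrix}0&1\\-1&1/g\end{pmatrix}^2$, with $\tr\begin{pmatrix}0&1\\-1&1/g\end{pmatrix}=1/g$. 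The three cases split accordingly.

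\begin{enumerate}
\item If $0<g<\tfrac12$, then $|1/g|>2$, so the limit is hyperbolic. Equivalently, $b_n-2a_n\to\infty$ (indeed $b_n\pm 2a_n\to+\infty$), so the operator is bounded below with compact resolvent. This can be checked through the quadratic form, using $b_n-a_n-a_{n-1}\ge(2-4g)n-C\to\infty$, which gives \eqref{thm2a}.
\item If $g>\tfrac12$, then $|1/g|<2$, so the limit is elliptic. Since the coefficient sequences are of bounded variation after normalization (the differences of $a_{n+1}/a_n$, $b_n/a_n$, $\lambda/a_n$ are summable with step 2), the results on generalized eigenvectors give uniform boundedness of transfer-matrix products, locally uniformly in $\lambda\in\RR$. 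A subordinacy / Last--Simon argument then yields purely absolutely continuous spectrum equal to $\RR$. This gives \eqref{thm2b}, the absence of eigenvalues, and $\sigmaSC=\emptyset$.
\item If $g=\tfrac12$, then $\tr=2$, so the limit is a nontrivial Jordan block. This is the critical case of periodically modulated Jacobi matrices.
\end{enumerate}

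\textbf{Main obstacle: the critical case.} I expect $g=\tfrac12$ to be the hardest step. Here one must compute the next-order term of $X_n$: writing $X_n=X+\tfrac1n R(\lambda)+O(n^{-2})$, the sign of a discriminant-type quantity, linear in $\lambda$, decides between elliptic behaviour (ac spectrum) and hyperbolic behaviour (discrete spectrum). The $\Delta$-terms alternate, and I expect them to cancel in the two-step product at first order. What remains is a combination of $\lambda$, $\mu$ and the constant $g(2\mu+\tfrac32)=\mu+\tfrac34$. Computing this, I would expect the threshold to come out independent of $\mu$ and equal to $\lambda=-\tfrac12$, giving $[-\tfrac12,\infty)$.

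I would then invoke the known theorem for Jordan-block-type periodic modulations (Świderski--Trojan). It states that on the region where the discriminant indicates oscillation, the operator is absolutely continuous with continuous nonvanishing density, so there are no eigenvalues there. On the complementary region the generalized eigenvectors are of exponential-type (or power) subordinate/nonsubordinate dichotomy, so $\sigmaEss$ is excluded and the spectrum there is discrete. This gives \eqref{thm2c} together with the final statements about $\sigmaSC$ and $\sigmaP$. The delicate part is verifying the regularity hypotheses, namely bounded variation of the $n$-scaled remainder sequences, and computing $R(\lambda)$ correctly.
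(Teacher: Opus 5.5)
\textbf{Overall.} Your reduction, the Carleman argument and the split according to $\tr\frakX_0(0)=g^{-2}-2$ agree with the paper. For $0<g<\frac12$, your form bound $b_n-a_n-a_{n-1}\ge(2-4g)n-C$ is a correct and more elementary substitute for Theorem~\ref{thm:perIII}.

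For $g>\frac12$ you are in effect invoking Theorem~\ref{thm:perI}. You assert its $\calD_1^2$ hypotheses but do not check them; the paper gets them from Propositions~\ref{prop:1} and~\ref{prop:2}. Also, ``bounded transfer matrices plus Last--Simon'' is not quite the right mechanism when $a_n\to\infty$. The product has determinant $a_0/a_n$, so it is $\sqrt{a_n}$ times the product that stays bounded, and the conclusion $\sigmaAC=\RR$ also uses $\sum 1/a_n=\infty$. Theorem~\ref{thm:perI} packages exactly this.

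\textbf{The gap: the critical case.} The real content of \eqref{thm2c} is the case $g=\frac12$, and there you never compute the threshold; you only predict that it is $-\frac12$. Moreover, the data you set up would not give $-\frac12$.

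First, the expansion is $a_{\mu,n}=g(2n+\mu+\frac32)+O(1/n)$, not $g(2n+2\mu+\frac32)$. With your constant, the computation below puts the threshold at $-\mu-\frac12$, which is wrong for $\mu=1$. As a sanity check, with the correct constant $b_n-a_n-a_{n-1}\to-\frac12\pm(-1)^n\frac{\Delta}{2}$, whose period average is the threshold; with yours it would average to $-\mu-\frac12$.

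Second, $(\lambda-b_k)/a_k\to-1/g$, so the one-step limit is $\begin{pmatrix}0&1\\-1&-1/g\end{pmatrix}$. The trace of the square does not see this sign. In the Jordan case, however, the sign of $[\frakX_0(0)]_{2,1}$ decides whether the a.c.\ half-line is bounded below or above.

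\textbf{How to close it.} Write $a_{\mu,n}=\alpha_n\sqrt{(n+t)(n+s)}$ with $\alpha_n\equiv 2g=1$, $t=\frac{1+\mu}{2}$, $s=\frac{2+\mu}{2}$, and $b^\pm_{\mu,n}=2n+\gamma_n$ with $\gamma_n=\mu\pm(-1)^n\frac{\Delta}{2}$. Proposition~\ref{prop:1} then gives the hypotheses of Theorem~\ref{thm:perIIb}, together with $s_n=1$ and $r_n=(t+s)-\gamma_n=\frac32\mp(-1)^n\frac{\Delta}{2}$. Since $\frakX_n(0)=\begin{pmatrix}-1&-2\\2&3\end{pmatrix}$ and $\varepsilon=1$, formula \eqref{eq:3} gives $\tau(x)=4-2(2x+r_0+r_1)=-4x-2$. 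So $\tau<0$ exactly on $(-\frac12,\infty)$, and Theorem~\ref{thm:perIIb} yields \eqref{thm2c}, $\sigmaSC=\emptyset$, and the absence of eigenvalues in $(-\frac12,\infty)$.
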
  
The statement of Theorem \ref{thm:22} ensures 
the following information on the spectral 
transition of the model: 
if $0<g<\frac 12$ then 
the whole spectrum of $H$ is 
discrete; if $g>\frac 12$ then the spectrum of $H$
is $\RR$ and $H$ has no eigenvalues; 
in the critical case $g=\frac 12$ the  spectrum of $H$ consists 
of the half-line $[-\tfrac 12 , \infty )$ with no eigenvalues in $(-\tfrac 12 , \infty )$ and a possible discrete spectrum in $(-\infty ,-\tfrac 12 )$.
\subsection{Anisotropic two-photon Rabi model} \label{sec:23} 
Assume $\Delta \in \RR$, $g_->0$, $g_+>0$, $g_- \ne g_+$, 
 and consider the linear map in 
$ \CC^2 \otimes \ell^2_{\rm fin}(\NN_0)$ given by 
the formula
  \begin{equation}\label{23} 
 \hat H = I_2 \otimes \hat N 
 +\frac{\Delta}{2}\, \sigma_z \otimes I + 
(g_-\sigma_- +g_+\sigma_+) \otimes (\hat a^\dagger )^2 +(g_+\sigma_- +g_-\sigma_+) \otimes \hat a^2
  \end{equation} 
Then $\hat H$ is symmetric in the Hilbert space 
$\CC^2 \otimes \ell^2(\NN_0)$ and  $H$, 
the Hamiltonian of the two-photon anisotropic Rabi model, 
is defined as   
the closure of $\hat H$ in $\CC^2 \otimes \ell^2(\NN_0)$. 
Similarly as in Section \ref{sec:22}, the operator $H$ can be written as a direct sum of four operators that can be expressed by Jacobi matrices and the properties of the
spectrum of $H$ can be deduced from the properties of the corresponding four Jacobi operators.
\begin{theorem} \label{thm:23} 
The operator $H$ is unitarily similar to the direct sum 
 \eqref{decomp2}, where 
$J_\mu^\pm =J((a_{\mu ,n}^\pm ),(b_{\mu ,n}^\pm ))$ are   Jacobi operators with  
\begin{equation} \label{thm3} 
\begin{cases}
\hspace{2mm}  
a_{\mu ,n}^\pm = (g\mp (-1)^n\, g') 
\sqrt{(2n+1+\mu )(2n+2+\mu )},\\ 
\hspace{2mm}  b_{\mu ,n}^{\pm} = 2n+\mu \pm (-1)^n \, \frac{\Delta}{2}
 \end{cases}  
\end{equation} 
where 
\begin{equation} \label{thm3g} 
g:=\frac{g_+ +g_-}{2} , \quad g':=\frac{g_+ -g_-}{2} .
\end{equation} 
The operators $J^\pm_\mu$ are self-adjoint in $\ell^2(\NN_0)$ and 
\begin{equation} \label{thm3c}
   \sigma(J_\mu^\pm ) =\sigma_{\rm discr}(J_\mu^\pm )  
    \hbox{ \, when }  g< \tfrac{1}{2}
\end{equation} 
\begin{equation} \label{thm3d}
  \sigma(J_\mu^\pm ) =\sigma_{\rm discr}(J_\mu^\pm )   \hbox{\, when } |g'|> \tfrac{1}{2}
\end{equation} 
\begin{equation} \label{thm3a}
   \RR =\sigma(J_\mu^\pm )= \sigmaAC(J_\mu^\pm ) 
    \hbox{ \, when } |g'|<\tfrac{1}{2}<g
\end{equation} 
\begin{equation} \label{thm3b} 
    [-\tfrac{1}{2}, \infty ) 
    =
\sigmaAC (J_\mu^\pm ) 
    =  \sigma (J_\mu^\pm ) 
\setminus \sigmaD(J_\mu^\pm )
      \hbox{\, when } g= \tfrac{1}{2}
\end{equation} 
\begin{equation} \label{thm3e}
(-\infty, -\tfrac{1}{2}] 
=
    \sigmaAC(J_\mu^\pm ) 
=  \sigma (J_\mu^\pm ) 
\setminus \sigmaD(J_\mu^\pm )
     \hbox{\, when } |g'|= \tfrac{1}{2}
\end{equation} 
Moreover, in all cases, $\sigmaSC(J_{\mu}^\pm) = \emptyset$, and $\sigmaP(J_{\mu}^\pm)$ is disjoint from the interior of $\sigmaAC(J_{\mu}^\pm)$.
\end{theorem}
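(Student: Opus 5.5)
The first step is to establish the unitary equivalence with \eqref{decomp2}. I would compute $H e^{\nu}_{n}$ from \eqref{23} and \eqref{320}: $(g_-\sigma_-+g_+\sigma_+)e^{\pm1}=g_\pm e^{\mp1}$ and $(g_+\sigma_-+g_-\sigma_+)e^{\pm1}=g_\mp e^{\mp1}$. Then I would restrict $H$ to the invariant subspaces $\calH_\mu^\pm$, spanned by the alternating-spin vectors $e^{\pm(-1)^n}_{2n+\mu}$. In the basis $f_n=e^{\pm(-1)^n}_{2n+\mu}$ the restriction is tridiagonal. The diagonal is $2n+\mu\pm(-1)^n\Delta/2$. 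The off-diagonal entries are $\sqrt{(2n+1+\mu)(2n+2+\mu)}$ times $g_+$ or $g_-$, alternating with $n$, and these factors are exactly $g\mp(-1)^n g'$ as in \eqref{thm3}. Since $g_\pm>0$, we have $g>|g'|$, so all $a^\pm_{\mu,n}>0$. Moreover $\sum_n 1/a^\pm_{\mu,n}=\infty$, so Carleman's criterion gives self-adjointness of $J_\mu^\pm$. It also shows that the closure of $\hat H$ coincides with the direct sum.

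For the spectral statements I would use the subordinacy theory for periodically modulated Jacobi matrices described in Section~\ref{sec:3}, here with period $N=2$. Write $a_n=\alpha_n\tilde a_n$ and $b_n=\beta_n\tilde a_n+O(1)$, with $\tilde a_n\sim 2n$ (so $\sum 1/\tilde a_n=\infty$). The $2$-periodic modulation is $\alpha_0=g\mp g'$, $\alpha_1=g\pm g'$, $\beta_0=\beta_1=1$. The key object is
\[
X_0(0)=\begin{pmatrix}0&1\\-\alpha_0/\alpha_1&-\beta_1/\alpha_1\end{pmatrix}\begin{pmatrix}0&1\\-\alpha_1/\alpha_0&-\beta_0/\alpha_0\end{pmatrix}.
\]
A direct computation gives
\[
\tr X_0(0)=\frac{1-2g^2-2g'^2}{g^2-g'^2},\qquad \tr X_0(0)-2=\frac{1-4g^2}{g^2-g'^2},\qquad \tr X_0(0)+2=\frac{1-4g'^2}{g^2-g'^2}.
\]
Because $g^2-g'^2>0$, the cases split as follows.

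\begin{itemize}
\item \emph{Elliptic case}, $|\tr X_0(0)|<2$. This holds if and only if $|g'|<\tfrac12<g$. The theorem for this case gives $\sigma=\sigmaAC=\RR$, no singular spectrum and no eigenvalues, which is \eqref{thm3a}.
\item \emph{Hyperbolic case}, $|\tr X_0(0)|>2$. This holds if and only if $g<\tfrac12$ or $|g'|>\tfrac12$. The corresponding result (exponential decay of a subordinate solution uniformly on compacts) gives a purely discrete spectrum, which is \eqref{thm3c} and \eqref{thm3d}.
\end{itemize}

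The borderline cases are the main obstacle. At $g=\tfrac12$ we have $\tr X_0(0)=2$, and at $|g'|=\tfrac12$ we have $\tr X_0(0)=-2$; in both the matrix $X_0(0)$ is a nontrivial Jordan block. There I would invoke the "critical" version of the theory. It requires checking the regularity hypotheses, namely bounded total variation of the periodic subsequences $(\alpha_{2k+i}/\tilde a\ldots)$, which holds since $a_n$ and $b_n$ have expansions in powers of $n^{-1}$. One then computes the limit
\[
\Lambda(x)=\lim_{k\to\infty}\tilde a_{2k}\,\bigl(\pm\tr X_{2k}(x)\mp 2\bigr).
\]
This is an affine function of the spectral parameter $x$ that also involves the $1/n$-corrections, including $\mu$ and $\Delta$. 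The spectrum is absolutely continuous, with no eigenvalues, on $\{\Lambda<0\}$, and discrete on $\{\Lambda>0\}$.

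I expect the expansion to show that $\Delta$ and $\mu$ cancel, since they alternate in sign with $(-1)^n$ and average out over a period. This should leave the threshold $x=-\tfrac12$. The sign change of the $x$-coefficient between the $\tr=2$ and $\tr=-2$ cases should explain why the half-line is $[-\tfrac12,\infty)$ in \eqref{thm3b} but $(-\infty,-\tfrac12]$ in \eqref{thm3e}. This is a consistency check against Theorem~\ref{thm:22}, which is the case $g'=0$.

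Carrying out this second-order expansion carefully, with the exact $\sqrt{(2n+1+\mu)(2n+2+\mu)}$ corrections and the correct orientation of the product of transfer matrices, is the delicate computation. Closedness of the spectrum then includes the endpoint $-\tfrac12$, which completes \eqref{thm3b} and \eqref{thm3e}. In every case the absence of $\sigmaSC$ and of embedded eigenvalues follows from the same subordinacy results.
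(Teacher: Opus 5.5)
Your overall route is the paper's: the same decomposition into $J_\mu^\pm$, Carleman for self-adjointness, the same trace identities, Theorems~\ref{thm:perIII} and \ref{thm:perI} away from criticality, and Theorem~\ref{thm:perIIb} at $g=\frac12$ and $|g'|=\frac12$. The critical cases carry the real content of \eqref{thm3b} and \eqref{thm3e}, and there the proposal has a genuine error. The criterion you write down is not the right one, and you do not compute it; you only expect the answer.

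The two-step transfer matrix $X_n(x)=B_{n+1}(x)B_n(x)$ is not unimodular:
\[
\det X_n(x)=\frac{a_{n-1}}{a_{n+1}}=\frac{\tilde a_{n-1}}{\tilde a_{n+1}}=1-\frac{4}{\tilde a_n}+O(n^{-2}).
\]
This deviation is of the same order $1/\tilde a_n$ as $\varepsilon\tr X_n(x)-2$. Whether $X_n(x)$ is elliptic or hyperbolic is decided by the discriminant $(\tr X_n)^2-4\det X_n$, i.e.\ by the sign of $\lim\tilde a_n\bigl(\varepsilon\tr X_n(x)-2\sqrt{\det X_n(x)}\bigr)$, not by your $\Lambda$. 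Expanding with $\tilde a_n=\sqrt{(2n+1+\mu)(2n+2+\mu)}$ and $b_n=\tilde a_n-\frac32\pm(-1)^n\frac{\Delta}{2}+O(n^{-1})$ gives
\[
\lim_{n\to\infty}\tilde a_n\bigl(\varepsilon\tr X_n(x)-2\bigr)=-\varepsilon\,\frac{1+2x}{g^2-g'^2}-4 .
\]
So your $\Lambda$ changes sign at $x=-1+2g'^2$ when $g=\frac12$, which is $x=-1$ for $g'=0$ and contradicts Theorem~\ref{thm:22}. When $|g'|=\frac12$ it changes sign at $x=2g^2-1$. The threshold $-\frac12$ does not come out of your criterion.

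The missing $+4$ is exactly $\lim\tilde a_n\bigl(2-2\sqrt{\det X_n(x)}\bigr)$. With it the limit becomes $-\varepsilon(1+2x)/(g^2-g'^2)$, which vanishes at $x=-\frac12$ and has the correct sign on each side. In the paper this contribution is built into $\tau$ in \eqref{eq:3} through the term $\sum_i\frac{s_i}{\alpha_{i-1}}\bigl(1-\varepsilon[\frakX_i(0)]_{1,1}\bigr)$, where $s_n=\alpha_{n-1}$ records $a_n-a_{n-1}\to\alpha_{n-1}$ (Proposition~\ref{prop:1}). The paper inserts $s_n$ and $r_n=\frac32\mp(-1)^n\frac{\Delta}{2}$ into \eqref{eq:3}. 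It uses $\alpha_0+\alpha_1=2$, respectively $|\alpha_0-\alpha_1|=2$, in its normalization, and obtains $\tau(x)=-\varepsilon(4x+2)/(\alpha_0\alpha_1)$. This is your corrected limit up to a positive factor. You should replace $\Lambda$ by $\tau$ from Theorem~\ref{thm:perIIb} (or by the discriminant) and actually carry out the computation.

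Two smaller points:
\begin{enumerate}
\item $\mu$ drops out not by alternation but because it enters $b_n$ and $\tilde a_n$ identically; it cancels in $r_n=\frac{\beta_n}{2}(t+s)-\gamma_n$.
\item The hypotheses you need are precisely \eqref{eq:4}, which Proposition~\ref{prop:1} verifies rigorously, rather than the appeal to ``expansions in powers of $n^{-1}$''.
\end{enumerate}
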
  
We discuss the above statement, 
following the notation 
of \cite{Li-ani}. Assume  
$0<g_+<g_-$ and denote $g:=g_-$,\, 
$r:={g_+}/{g_-} \in (0,1)$. Then 
Theorem \ref{thm:23} ensures the following:
\begin{enumerate}
    \item If $g<\frac{1}{1+r}$, 
then the whole spectrum of $H$ is discrete
\item If $g=:g_{\rm{cr}}=\frac{1}{1+r}$, then the spectrum of $H$ consists of the half-line $[-\frac 12,\infty )$ 
with no eigenvalues in $(-\frac 12,\infty )$ and a possible discrete spectrum in 
$(-\infty ,-\frac 12)$
\item If $\frac{1}{1+r} < g 
< \frac{1}{1-r}$, 
then the spectrum of $H$ is $\RR$ and there is no eigenvalue  
\item If $g=g'_{\rm{cr}}=\frac{1}{1-r}$, 
then  the spectrum of 
$H$ consists of the half-line $(-\infty ,-\frac 12]$ 
with no eigenvalues in $(-\infty ,-\frac 12 )$ and a possible discrete spectrum in $(-\frac 12,\infty )$ 
\item If $g>\frac{1}{1-r}$, 
then  the whole spectrum of 
$H$ is discrete  
\end{enumerate}
We note that the cases (4), (5)    
have not been investigated 
in \cite{Cui, Li-ani, Li_26, Lo-ani}.  
Only the case (1) was investigated numerically. 

\subsection{Two-photon Rabi-Stark quantum model} \label{sec:24} 
Assume $g>0$,  $\Delta \in \RR$, $\kappa \in \RR$ and consider the linear map in 
$\CC^2 \otimes \ell^2_{\rm fin}(\NN_0)$ given by the formula
  \begin{equation}\label{24} 
  \hat H = I_2 \otimes \hat N 
  +\, \sigma_z \otimes 
  \Big( {\kappa \hat N +\frac{\Delta}{2}}\Big)  + 
  g \sigma_x \otimes \left( { \hat a^2 + 
  ({\hat a}^\dagger )^2 }\right) 
  \end{equation} 
The operator $\hat H$ is symmetric in the Hilbert space 
$\CC^2 \otimes \ell^2(\NN_0)$ and  $H$, 
the Hamiltonian of the two-photon Rabi-Stark quantum model, 
is defined as   
the closure of $\hat H$ in $\CC^2 \otimes \ell^2(\NN_0)$. Similarly as in Sections  \ref{sec:22} and \ref{sec:23}, the operator $H$ can be written as a direct sum of four operators that can be expressed by Jacobi matrices and the properties of the
spectrum of $H$ can be deduced from the properties of the corresponding four Jacobi operators. 
\begin{theorem} \label{thm:24} 
The operator $H$ is unitarily similar to the direct sum 
 \eqref{decomp2}, where 
$J_\mu^\pm =J((a_{\mu ,n} ),(b_{\mu ,n}^\pm ))$ are   Jacobi operators with  
\begin{equation} \label{thm4} 
\begin{cases}
\hspace{2mm}  
a_{\mu ,n} = g
\sqrt{(2n+1+\mu )(2n+2+\mu )},\\ 
\hspace{2mm}  b_{\mu ,n}^{\pm} = (2n+\mu )
\left({ 1 \pm (-1)^n \kappa }\right)
\pm (-1)^n \frac{\Delta}{2}
 \end{cases}  
\end{equation} 
The operators $J^\pm_\mu$ are self-adjoint in $\ell^2(\NN_0)$ and 
\begin{equation} \label{thm4a} \sigma(J_\mu^\pm ) =\sigma_{\rm discr}(J_\mu^\pm )  
    \hbox{ \, when } |\kappa |>1
\end{equation} 
\begin{equation} \label{thm4e}
   \sigma(J_\mu^\pm ) =\sigma_{\rm discr}(J_\mu^\pm )  
    \hbox{ \, when } \kappa^2 +4g^2<1
\end{equation} 
 \begin{equation} \label{thm4c}
     \RR =\sigma(J_\mu^\pm )= \sigmaAC(J_\mu^\pm ) 
    \hbox{ \, when } |\kappa |<1 \hbox{ and } 
    \kappa^2 +4g^2>1
\end{equation} 
\begin{equation} \label{thm4d}
    [ \tfrac {\kappa^2 -1 - \kappa \Delta}{2}
    , \infty ) =
\sigmaAC (J_\mu^\pm ) 
    =  \sigma (J_\mu^\pm ) 
\setminus \sigmaD(J_\mu^\pm )
    \hbox{ \, when } \kappa^2 +4g^2=1
\end{equation} 
\begin{equation} \label{thm4b} 
  (-\infty, -\tfrac { \kappa \Delta}{2} ] 
  =
\sigmaAC (J_\mu^\pm ) 
   =  \sigma (J_\mu^\pm ) 
\setminus \sigmaD(J_\mu^\pm ) 
    \hbox{ \, when }  |\kappa |=1
 \end{equation}  
Moreover, in all cases, $\sigmaSC(J_{\mu}^\pm) = \emptyset$, and $\sigmaP(J_{\mu}^\pm)$ is disjoint from the interior of $\sigmaAC(J_{\mu}^\pm)$.
\end{theorem}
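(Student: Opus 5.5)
The plan is to mirror the scheme used for Theorems \ref{thm:21}--\ref{thm:23}: first obtain the decomposition \eqref{decomp2}, and then read off the spectral picture from the asymptotics of a $2$-periodically modulated Jacobi matrix.

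\textbf{Step 1: decomposition and self-adjointness.} Using \eqref{320} and $\sigma_z e^{\nu}=\nu e^{\nu}$, $\sigma_x e^\nu=e^{-\nu}$, I compute
\[
\hat H e^\nu_n=\bigl(n+\nu(\kappa n+\tfrac{\Delta}{2})\bigr)e^\nu_n+g\sqrt{(n+1)(n+2)}\,e^{-\nu}_{n+2}+g\sqrt{n(n-1)}\,e^{-\nu}_{n-2}.
\]
Hence, on the subspace spanned by $f_n:=e^{\pm(-1)^n}_{2n+\mu}$, $\hat H$ acts by the Jacobi matrix with the entries \eqref{thm4}. Ordering the four families of vectors gives the unitary $U$ realizing \eqref{decomp2}. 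Self-adjointness of each $J^\pm_\mu$ follows from Carleman's criterion, since $a_{\mu,n}\sim 4gn$ and so $\sum 1/a_{\mu,n}=\infty$. It follows that the closure $H$ is self-adjoint and unitarily similar to the direct sum.

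\textbf{Step 2: periodic modulation and the trace of the transfer matrix.} Write $a_{\mu,n}=\alpha_n\tilde a_n$ and $b^\pm_{\mu,n}=\beta_n\tilde a_n$ with $\tilde a_n=4n$ (up to lower-order corrections of bounded variation) and $2$-periodic $\alpha_n\equiv g$, $\beta_0=\tfrac12(1\pm\kappa)$, $\beta_1=\tfrac12(1\mp\kappa)$. The limit transfer matrix over one period is
\[
\frakX_0(0)=\begin{pmatrix}0&1\\-1&-\beta_1/g\end{pmatrix}\begin{pmatrix}0&1\\-1&-\beta_0/g\end{pmatrix},
\qquad
\tr\frakX_0(0)=\frac{\beta_0\beta_1}{g^2}-2=\frac{1-\kappa^2}{4g^2}-2 .
\]
The four regimes of the theorem correspond exactly to the position of this trace:
\begin{itemize}
\item $\tr\frakX_0(0)>2$ iff $\kappa^2+4g^2<1$;
\item $\tr\frakX_0(0)<-2$ iff $|\kappa|>1$;
\item $|\tr\frakX_0(0)|<2$ iff $|\kappa|<1<\kappa^2+4g^2$;
\item $\tr\frakX_0(0)=2$ on $\kappa^2+4g^2=1$, and $\tr\frakX_0(0)=-2$ on $|\kappa|=1$.
\end{itemize}

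\textbf{Step 3: the non-critical cases.} I would invoke the subordinacy results for periodically modulated Jacobi matrices described in Section \ref{sec:3}, which I may assume from the earlier part of the paper.
\begin{itemize}
\item If $|\tr\frakX_0(0)|>2$, the matrix $\frakX_0(0)$ is hyperbolic. Then $J^\pm_\mu$ has compact resolvent, which gives \eqref{thm4a} and \eqref{thm4e}.
\item If $|\tr\frakX_0(0)|<2$, the matrix is elliptic. Since the corrections have bounded variation and $\sum 1/\tilde a_n=\infty$, generalized eigenvectors are uniformly comparable to $\|\,\cdot\,\|$-bounded sequences locally uniformly in $\lambda\in\RR$. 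This gives purely absolutely continuous spectrum equal to $\RR$, with no eigenvalues and $\sigmaSC=\emptyset$, i.e.\ \eqref{thm4c}.
\end{itemize}

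\textbf{Step 4: the critical cases (main obstacle).} When $\tr\frakX_0(0)=\pm2$, the matrix $\frakX_0(0)$ is a nontrivial Jordan block and the spectral type is decided by the next-order term. I would expand
\[
\frakX_n(\lambda)=\frakX_0(0)+\frac1n R(\lambda)+O(n^{-2})
\]
and compute the sign of the quantity $\Lambda(\lambda)=\lim n\bigl(\tr\frakX_n(\lambda)\mp2\bigr)$. This quantity is affine in $\lambda$, and the relevant theorem yields the following dichotomy:
\begin{itemize}
\item where $\Lambda$ has the elliptic sign, the spectrum is absolutely continuous with no embedded eigenvalues;
\item where $\Lambda$ has the hyperbolic sign, the spectrum is discrete.
\end{itemize}
The delicate point is the bookkeeping of the $O(1)$ terms. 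These are the $\mu$-shifts, the $\pm(-1)^n\tfrac{\Delta}{2}$, and the $\tfrac{1}{2}$-corrections in $\sqrt{(2n+1+\mu)(2n+2+\mu)}=2n+\tfrac{3}{2}+\mu+O(1/n)$, which must be tracked in the product of the two one-step matrices. On $\kappa^2+4g^2=1$ this computation should give a zero of $\Lambda$ at $\lambda=\tfrac{\kappa^2-1-\kappa\Delta}{2}$, with the elliptic side to the right, which is \eqref{thm4d}. On $|\kappa|=1$ one diagonal coefficient grows like $2(2n+\mu)$ along one parity while the other parity stays bounded. There I expect to conjugate by $(-1)^n$ (or pass to the $\tr=-2$ version of the theorem) and obtain the threshold $-\tfrac{\kappa\Delta}{2}$ with the elliptic side to the left, which is \eqref{thm4b}.

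The claims $\sigmaSC=\emptyset$ and the absence of eigenvalues in the interior of $\sigmaAC$ then come for free from the same subordinacy theorems applied on compact subsets of the elliptic region.
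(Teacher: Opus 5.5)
You have the paper's architecture: parity/Bargmann decomposition, Carleman, classification by $\tr\frakX_0(0)$, then the Jordan-block theory at $\tr\frakX_0(0)=\pm2$. However, the step that is supposed to produce the thresholds in \eqref{thm4d} and \eqref{thm4b} does not give them. Write $X_n(\lambda)=B_{n+1}(\lambda)B_n(\lambda)$ for the actual one-period transfer matrix, with $B_n(\lambda)=\begin{pmatrix}0&1\\-a_{n-1}/a_n&(\lambda-b_n)/a_n\end{pmatrix}$. Then $\det X_n(\lambda)=a_{n-1}/a_{n+1}=1-\frac2n+O(n^{-2})$, so the determinant moves away from $1$ at the same order as $\tr X_n\mp2$. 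The elliptic/hyperbolic dichotomy is therefore decided by the discriminant,
\[
\lim_{n\to\infty}\frac n4\Bigl[(\tr X_n(\lambda))^2-4\det X_n(\lambda)\Bigr]=\varepsilon\Lambda(\lambda)+2,
\]
and not by the sign of $\Lambda$ alone.

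Carrying out your expansion shows the discrepancy concretely:
\begin{itemize}
\item On $\kappa^2+4g^2=1$ it gives $\Lambda(\lambda)=2-\frac{2\lambda+3(1-\kappa^2)+\kappa\Delta}{2g^2}$. Its zero is $\kappa^2-1-\frac{\kappa\Delta}{2}$, not $\frac{\kappa^2-1-\kappa\Delta}{2}$.
\item On $|\kappa|=1$ it gives $\Lambda(\lambda)=2-\frac{2\lambda+\kappa\Delta}{2g^2}$. Its zero is $2g^2-\frac{\kappa\Delta}{2}$ instead of $-\frac{\kappa\Delta}{2}$.
\item The same defect shows on the Laguerre matrix $a_n=n+1$, $b_n=2n+1$: there $\Lambda(\lambda)=\lambda+1$, while the a.c.\ spectrum is $[0,\infty)$.
\end{itemize}
Once the determinant term is included, $\varepsilon\Lambda+2$ is exactly the polynomial $\tau^\pm$ of Theorem~\ref{thm:perIIb}. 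The paper evaluates it via \eqref{eq:3} with $s_n\equiv2g$ and $r^\pm_n=\frac32(1\pm(-1)^n\kappa)\mp(-1)^n\frac\Delta2$ from Proposition~\ref{prop:1}. The drift of $a_{n-1}/a_n$ enters both through the trace and through $\det X_n$, and your $\Lambda$ only sees the first. So the thresholds you state are correct, but they do not follow from the quantity you propose to compute. The clean fix is to apply Theorem~\ref{thm:perIIb} directly. It covers $\varepsilon=-1$, so no conjugation trick is needed. In any case, conjugating by $(-1)^n$ leaves $\tr\frakX_0(0)$ unchanged for $N=2$ and destroys positivity of $\alpha_n$.

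There is also an arithmetic slip in Step~2 that makes your case list inconsistent with your own formula: $a_{\mu,n}\sim 2gn$, not $4gn$. With $\tilde a_n=4n$, $\alpha_n\equiv g$, $\beta_n=\frac12(1\pm(-1)^n\kappa)$, your $\beta_n/\alpha_n$ is half of $\lim b^\pm_{\mu,n}/a_{\mu,n}=(1\pm(-1)^n\kappa)/g$, so the second condition in \eqref{eq:5} fails. Your formula $\tr\frakX_0(0)=\frac{1-\kappa^2}{4g^2}-2$ would then put the transitions on $\kappa^2+16g^2=1$. The paper's normalization $\alpha_n\equiv2g$, $\beta^\pm_n=2(1\pm(-1)^n\kappa)$ gives $\tr\frakX_0(0)=\frac{1-\kappa^2}{g^2}-2$, which does yield the four regimes you list. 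With that correction, your Steps~1--3 coincide with the paper's argument: Theorems~\ref{thm:perIII} and \ref{thm:perI}, with hypotheses supplied by Propositions~\ref{prop:1} and \ref{prop:2}.
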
  
The statement of Theorem \ref{thm:24} ensures the following: 
\begin{enumerate}
    \item the whole spectrum of $H$ is discrete  
in the case $\kappa^2+4g^2<1$ and 
in the case $|\kappa |>1$
\item if $|\kappa |<1$ and 
$\kappa^2+4g^2>1$ then the spectrum of $H$ is $\RR$ and there is no eigenvalue 
\item in the critical case 
$\kappa^2+4g^2=1$ the spectrum of 
$H$ consists of the half-line 
$[ \tfrac {\kappa^2 -1 - \kappa \Delta}{2}, \infty )$ 
with no eigenvalues in 
$(\tfrac {\kappa^2 -1 - \kappa \Delta}{2}, \infty )$ and a possible discrete spectrum in 
$(-\infty ,\tfrac {\kappa^2 -1 - \kappa \Delta}{2})$
\item in the critical case $|\kappa |=1$ the spectrum of 
$H$ consists of the half-line 
$(-\infty ,-\tfrac { \kappa \Delta}{2}]$ 
with no eigenvalues in $(-\infty ,-\tfrac { \kappa \Delta}{2})$ and a possible discrete spectrum in 
$(-\tfrac { \kappa \Delta}{2},\infty )$ 
\end{enumerate}
We note that this model was investigated 
theoretically and numerically by J. Li, Q.-H. Chen 
\cite{Li-Stark, Li_25}, assuming $|\kappa |<1$. It seems that other cases  have not 
been investigated by now. 

\section{The general scheme} \label{sec:3} 
\subsection{A criterion for self-adjointness} 
\label{sec:31} 
We begin by the following well-known result 
(see, e.g. \cite[Corollary 6.19]{Schmudgen2017}) 
\begin{theorem}[Carleman] \label{Carleman}
Let $(a_n)$ and $(b_n)$ 
be two sequences of real numbers. If $J((a_n),(b_n))$ 
is the closure of the linear operator defined in 
$\ell^2_{\rm fin}(\NN_0)$ by \eqref{J} 
and 
\begin{equation} \label{eq:Carleman}
    \sum_{n=0}^\infty \frac{1}{|a_n|} =\infty 
 \end{equation} 
then $J((a_n),(b_n))$ is self-adjoint in $\ell^2(\NN_0)$. 
\end{theorem}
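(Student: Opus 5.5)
The plan is to verify self-adjointness through the deficiency indices. The closure $J=J((a_n),(b_n))$ is a closed symmetric operator, so it is self-adjoint if and only if $\ker(J^*-zI)=\{0\}$ for $z=\pm i$. The first step is to identify $J^*$. A direct computation with the basis vectors $e_n$ shows that $J^*$ is the maximal Jacobi operator: its domain consists of all $u\in\ell^2(\NN_0)$ whose formal image $(\mathcal J u)_n=a_nu_{n+1}+a_{n-1}u_{n-1}+b_nu_n$ (with $a_{-1}u_{-1}:=0$) again lies in $\ell^2(\NN_0)$, and $J^*u=\mathcal J u$. Hence it suffices to show that every $\ell^2$ solution of the three-term recurrence $\mathcal J u=zu$ with $\operatorname{Im} z\neq 0$ vanishes identically.

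Next I treat the degenerate case. If $a_n=0$ for some $n$, the matrix splits into a finite block and a tail. The finite block is Hermitian, so its restriction of $u$ must vanish for nonreal $z$. The tail is a Jacobi matrix of the same type, and its series $\sum 1/|a_n|$ still diverges. Repeating this reduction, and noting that with infinitely many zero $a_n$ everything splits into finite Hermitian blocks, reduces matters to the case $a_n\ne0$ for all $n$. In that case a solution $u$ is determined by $u_0$, so I may assume $u_0\neq0$ and aim for a contradiction.

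The key step is a discrete Green (Wronskian) identity. I multiply the $k$-th equation $a_ku_{k+1}+a_{k-1}u_{k-1}+b_ku_k=zu_k$ by $\overline{u_k}$ and take imaginary parts, using that $a_k,b_k$ are real. Summing over $0\le k\le n$, the terms telescope to
\[
\operatorname{Im} z\sum_{k=0}^{n}|u_k|^2=a_n\operatorname{Im}\big(u_{n+1}\overline{u_n}\big).
\]
Since the left side is at least $|\operatorname{Im} z|\,|u_0|^2=:c>0$ in absolute value, we obtain $|a_n|\,|u_n|\,|u_{n+1}|\ge c$ for every $n$.

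The conclusion then follows:
\[
\frac{1}{|a_n|}\le \frac{|u_n||u_{n+1}|}{c}\le\frac{|u_n|^2+|u_{n+1}|^2}{2c}.
\]
If $u\in\ell^2(\NN_0)$, the right side is summable, which contradicts \eqref{eq:Carleman}. Thus both deficiency spaces are trivial and $J$ is self-adjoint.

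I expect the main care to lie in two places rather than in the estimate itself: the precise identification of $J^*$ with the maximal operator, and the bookkeeping in the telescoping sum (the boundary term at $k=0$ disappears because $a_{-1}u_{-1}=0$).
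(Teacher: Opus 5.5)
The paper gives no proof of this statement; it quotes it from \cite[Corollary 6.19]{Schmudgen2017}. Your argument is a correct, self-contained proof, subject to the remark below on zeros of $a_n$, and it differs from the standard textbook route.

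The textbook route identifies self-adjointness of the closure with determinacy of the associated moment problem. It then uses that in the indeterminate case the polynomials of the first and second kind, $p_n(0)$ and $q_n(0)$, are both square summable. Finally it applies Cauchy--Schwarz to their Wronskian $a_n\bigl(p_{n+1}(0)q_n(0)-p_n(0)q_{n+1}(0)\bigr)$, which is a nonzero constant.

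You instead work with a single deficiency vector at nonreal $z$. The Green identity $a_n\operatorname{Im}(u_{n+1}\overline{u_n})=\operatorname{Im} z\sum_{k=0}^{n}|u_k|^2$ gives a quantity bounded away from zero in place of the constant Wronskian, and the same Cauchy--Schwarz step finishes. Your route needs only the description of $J^*$ as the maximal operator and von Neumann's criterion, with no moment theory and no Weyl alternative. It also covers real $a_n$ of either sign, as in the statement, whereas the moment-theoretic setting assumes $a_n>0$. The moment-problem route, in exchange, places the result inside the full determinacy and limit-circle theory.

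One correction concerns your ``degenerate case'' paragraph. Suppose $1/|a_n|$ is read as $+\infty$ when $a_n=0$. Then the hypothesis holds trivially, and the tail after the last zero need not satisfy Carleman's condition. So your claim that its series ``still diverges'' fails, and so does the statement itself. For example, take $a_0=0$, $a_n=n^2$ for $n\ge 1$, $b_n=0$: the tail is a classical indeterminate, hence non-self-adjoint, Jacobi operator.

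Your reduction is correct once the hypothesis is read as divergence of $\sum_{a_n\neq 0}1/|a_n|$ (equivalently, of the tail series), and you should state this reading explicitly. The case matters for the paper: $\kappa=0$ in Theorem~\ref{thm:21} gives $a_0=0$, and there the tail does satisfy the condition.
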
 
The assertion of Theorem \ref{Carleman} ensures the 
self-adjointness of the operators $H$ introduced in Sections \ref{sec:21}-\ref{sec:24}. This fact results  
from the decompositions \eqref{decomp1}, \eqref{decomp2}, and the following
\begin{Corollary}
The operators $J^\pm$ (respectively $J^\pm_\mu$) 
introduced in  
Theorem \ref{thm:21} (respectively Theorem \ref{thm:22}, 
\ref{thm:23} or Theorem \ref{thm:24}) are self-adjoint. 
\end{Corollary}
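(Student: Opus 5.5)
The plan is to verify Carleman's condition \eqref{eq:Carleman} for each family of off-diagonal coefficients and then invoke Theorem \ref{Carleman}. In every case the diagonal sequences are real, so only the divergence of $\sum_n 1/|a_n|$ needs checking. Its key feature is that it involves only the $a_n$, so the diagonal terms, including the Stark terms $(2n+\mu)(1\pm(-1)^n\kappa)$, play no role.

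\emph{Models of Theorems \ref{thm:21}, \ref{thm:22} and \ref{thm:24}.} Here every coefficient grows at most linearly in $n$.
\begin{itemize}
\item For Theorem \ref{thm:21} we have $a_n = g\sqrt{(n+1)(n+2\kappa)}$ with $\kappa\ge 0$. For $n\ge 1$ this gives $0< a_n \le g(n+1+2\kappa)$, so $1/a_n \ge 1/(g(n+1+2\kappa))$.
\item For Theorems \ref{thm:22} and \ref{thm:24} we have $a_{\mu,n} = g\sqrt{(2n+1+\mu)(2n+2+\mu)} \le g(2n+3)$, hence $1/a_{\mu,n}\ge 1/(g(2n+3))$.
\end{itemize}
Comparing with the harmonic series shows that $\sum 1/a_n=\infty$ in each case. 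There is one degenerate point: in Theorem \ref{thm:21} with $\kappa=0$ we get $a_0=0$, which violates the standing positivity assumption. I would treat it separately, noting that the matrix then decouples into the $1\times1$ block $b_0$ and a Jacobi matrix on $\{e_n\}_{n\ge1}$ with positive coefficients $g\sqrt{(n+1)n}$. The Carleman argument applies to the second block, so the direct sum is still self-adjoint. Theorem \ref{Carleman} as stated in the paper only asks for real sequences, and a zero term only makes the sum infinite, so this is consistent with it.

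\emph{Anisotropic model, Theorem \ref{thm:23}.} Here the coefficients are $a^\pm_{\mu,n}=(g\mp(-1)^n g')\sqrt{(2n+1+\mu)(2n+2+\mu)}$. Since $g\pm g'\in\{g_+,g_-\}$ and both are positive, we have
\[
0<a^\pm_{\mu,n}\le \max(g_+,g_-)\,(2n+3),
\]
and the same harmonic comparison applies.

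In all cases Theorem \ref{Carleman} gives self-adjointness of $J^\pm$ and $J^\pm_\mu$. There is no genuine obstacle. The only points needing care are the possible vanishing of $a_0$ when $\kappa=0$ and keeping track that the anisotropic coefficients remain positive, with modulus bounded by $\max(g_\pm)$ times a linear function of $n$.
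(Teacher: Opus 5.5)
Your proposal is correct and follows the paper's proof: the paper likewise just notes that $|a_n|\le c(n+1)$ for every family of off-diagonal coefficients and then invokes Carleman's theorem. You also treat the degenerate case $\kappa=0$ in Theorem \ref{thm:21} separately (there $a_0=0$ and a $1\times 1$ block splits off), which the paper passes over silently, and your decoupling argument handles it correctly.
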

\begin{proof} Each operator in question is defined as 
the closure of the operator defined in 
$\ell^2_{\rm fin}(\NN_0)$ by  
\eqref{J} and 
$$
|a_n |\le c(n+1) 
$$
holds with a certain $c>0$, hence  $(a_n)$ 
satisfies the Carleman's condition \eqref{eq:Carleman}. 
\end{proof}

\subsection{Stolz class}
Let $N$ be a positive integer. We say that a sequence $(x_n : n \geq 1)$ belongs to $\calD_1^N$ if
\[
    \sum_{n=1}^\infty |x_{n+N} - x_n| < \infty.
\]
Notice that $(x_n : n \geq 1) \in \calD_1^N$ if and only if for any $i \in \{0,1,\ldots,N-1\}$ the sequence $(x_{nN+i} : n \geq 1)$ belongs to $\calD_1^1$, so, in particular, it is convergent. Moreover, we have the following
\begin{proposition} \label{prop:3}
Suppose that $(x_n),(y_n) \in \calD_1^N$. Then
\begin{itemize}
\item for any $\alpha \in \CC$ we have $(\alpha \cdot x_n) \in \calD_1^N$,
\item $(x_n + y_n) \in \calD_1^N$,
\item $(x_n \cdot y_n) \in \calD_1^N$,
\item suppose that $f$ is a Lipschitz continuous function on a compact interval $[a,b]$. If for some $M$ we have $\{x_n : n \geq M\} \subset [a,b]$, then $(f(x_n)) \in \calD_1^N$.
\end{itemize}
\end{proposition}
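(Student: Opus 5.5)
The plan is to verify each of the four items directly from the definition of $\calD_1^N$, namely summability of the $N$-step differences $\Delta_N x_n := x_{n+N}-x_n$. The first two items are immediate. We have $|\alpha x_{n+N}-\alpha x_n| = |\alpha|\,|x_{n+N}-x_n|$, and by the triangle inequality $|(x+y)_{n+N}-(x+y)_n| \le |x_{n+N}-x_n| + |y_{n+N}-y_n|$. Summing over $n$ gives the claims.

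For the product, the key preliminary step is to show that every sequence in $\calD_1^N$ is bounded. As observed right before the statement, for each residue $i\in\{0,\dots,N-1\}$ the subsequence $(x_{nN+i})_n$ lies in $\calD_1^1$. Hence it is a Cauchy sequence, because $|x_{mN+i}-x_{kN+i}|\le \sum_{j\ge k}|x_{(j+1)N+i}-x_{jN+i}|$, and so it converges. Being a finite union of convergent sequences, $(x_n)$ is bounded, say $|x_n|,|y_n|\le C$. Then
\[
x_{n+N}y_{n+N}-x_ny_n = x_{n+N}(y_{n+N}-y_n) + y_n(x_{n+N}-x_n),
\]
so $|x_{n+N}y_{n+N}-x_ny_n| \le C\,(|y_{n+N}-y_n| + |x_{n+N}-x_n|)$. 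This bound is summable.

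For the last item, let $L$ be the Lipschitz constant of $f$ on $[a,b]$. For $n\ge M$ both $x_n$ and $x_{n+N}$ lie in $[a,b]$, hence $|f(x_{n+N})-f(x_n)|\le L|x_{n+N}-x_n|$, and the tail $\sum_{n\ge M}$ converges. The finitely many terms with $n<M$ do not affect convergence, provided $f(x_n)$ is defined for those indices. I would state that $(f(x_n))$ is understood for $n\ge M$, or that $f$ is extended arbitrarily to the finitely many remaining terms, since membership in $\calD_1^N$ depends only on the tail.

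The only mildly nontrivial point is the boundedness used in the product rule, and the argument above handles it. The whole proof is therefore a routine verification.
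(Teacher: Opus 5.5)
Your proof is correct and is essentially the routine verification the paper has in mind; the paper gives no argument beyond calling the proof of Proposition~\ref{prop:3} ``straightforward''. You correctly identify boundedness of sequences in $\calD_1^N$, obtained from convergence along each residue class mod $N$, as the only ingredient the product rule needs, and your remark that $(f(x_n))$ need only be considered for $n \ge M$ correctly handles the last item.
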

The proof of Proposition~\ref{prop:3} is straightforward.

\subsection{Periodic modulations}
Let $N$ be a positive integer. We say that Jacobi parameters $(a_n),(b_n)$ are $N$-periodically modulated if there exist $N$-periodic sequences $(\alpha_n : n \in \ZZ),(\beta_n : n \in \ZZ)$ of positive and real numbers, respectively, such that
\begin{equation}
	\label{eq:5}
    \lim_{n \to \infty} \bigg| \frac{a_{n-1}}{a_n} - \frac{\alpha_{n-1}}{\alpha_n} \bigg| = 0, \quad 
    \lim_{n \to \infty} \bigg| \frac{b_n}{a_n} - \frac{\beta_n}{\alpha_n} \bigg| = 0, \quad
    \lim_{n \to \infty} a_n = \infty.
\end{equation}
\begin{proposition} \label{prop:2}
If Jacobi parameters satisfy 
\begin{equation} 
    \label{eq:1}
    \bigg( \frac{\alpha_{n-1}}{\alpha_n} a_n - a_{n-1} \bigg),
    \bigg( \frac{\beta_{n}}{\alpha_n} a_n - b_n \bigg),
    \bigg( \frac{1}{\sqrt{a_n}} \bigg) \in \calD_1^N,
\end{equation}
then
\begin{equation}
    \label{eq:1'}
    \bigg( \frac{a_{n-1}}{a_n} \bigg),
    \bigg( \frac{b_n}{a_n} \bigg),
    \bigg( \frac{1}{a_n} \bigg) \in \calD_1^N.
\end{equation}
\end{proposition}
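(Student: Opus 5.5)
We will derive \eqref{eq:1'} from \eqref{eq:1} using only the algebraic closure properties of $\calD_1^N$ in Proposition~\ref{prop:3}, together with the fact that $N$-periodic sequences and convergent sequences with summable $N$-step differences lie in $\calD_1^N$. We treat the three sequences in the order $(1/a_n)$, $(a_{n-1}/a_n)$, $(b_n/a_n)$.

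First, since $(1/\sqrt{a_n}) \in \calD_1^N$ and $a_n>0$, Proposition~\ref{prop:3} applied to the product of $(1/\sqrt{a_n})$ with itself gives $(1/a_n) \in \calD_1^N$. Next, write
\[
\frac{a_{n-1}}{a_n} = \frac{\alpha_{n-1}}{\alpha_n} - \frac{1}{a_n}\Big(\frac{\alpha_{n-1}}{\alpha_n} a_n - a_{n-1}\Big).
\]
The first term is $N$-periodic, so it has zero $N$-step differences and belongs to $\calD_1^N$. The second term is a product of $(1/a_n)$ and the first sequence in \eqref{eq:1}, both in $\calD_1^N$, so it belongs to $\calD_1^N$ by Proposition~\ref{prop:3}. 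Closure under sums then gives $(a_{n-1}/a_n)\in\calD_1^N$. In the same way,
\[
\frac{b_n}{a_n} = \frac{\beta_n}{\alpha_n} - \frac{1}{a_n}\Big(\frac{\beta_n}{\alpha_n}a_n - b_n\Big),
\]
and the identical argument gives $(b_n/a_n)\in\calD_1^N$.

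The only point needing care is the product rule in Proposition~\ref{prop:3}, which implicitly uses boundedness of the factors. From
\[
x_{n+N}y_{n+N}-x_ny_n = x_{n+N}(y_{n+N}-y_n) + y_n(x_{n+N}-x_n),
\]
summability follows once both $(x_n)$ and $(y_n)$ are bounded. This holds automatically: a sequence in $\calD_1^N$ converges along each residue class modulo $N$, hence is bounded. Thus no extra hypothesis is needed; in particular we do not even need $a_n\to\infty$ here, only positivity of $a_n$ so that $1/\sqrt{a_n}$ makes sense.
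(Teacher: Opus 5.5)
Your proof is correct and follows the paper's proof exactly. You obtain $(1/a_n)$ as the square of $(1/\sqrt{a_n})$, write $a_{n-1}/a_n$ and $b_n/a_n$ as an $N$-periodic term minus $\frac{1}{a_n}$ times a sequence from \eqref{eq:1}, and conclude with Proposition~\ref{prop:3}. Your added check that sequences in $\calD_1^N$ are bounded, so the product rule applies, justifies a step the paper calls straightforward and omits.
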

\begin{proof}
We shall repeatedly use Proposition~\ref{prop:3}. Suppose that \eqref{eq:1} is satisfied. Thus
\[
    \frac{1}{a_n} = \frac{1}{\sqrt{a}_n} \frac{1}{\sqrt{a}_n}
\]
also belongs to $\calD_1^N$. Next, since
\begin{align*}
    \frac{a_{n-1}}{a_n} 
    &= 
    -\frac{1}{a_n} \Big( \frac{\alpha_{n-1}}{\alpha_n} a_n - a_{n-1} \Big) + \frac{\alpha_{n-1}}{\alpha_n} \\
    \frac{b_n}{a_n} 
    &=
    -\frac{1}{a_n} \Big( \frac{\beta_{n}}{\alpha_n} a_n - b_{n} \Big) + \frac{\beta_{n}}{\alpha_n} 
\end{align*}
these sequences also belong to $\calD_1^N$, which ends the proof.
\end{proof}
\subsection{Main tools}
For any $n \in \ZZ$ let us define
\[
    \frakX_n(x) =
    \frakB_{n+N-1}(x) \ldots \frakB_{n+1}(x) \frakB_n(x), 
    \quad \text{where} \quad
    \frakB_n(x) = 
    \begin{pmatrix}
        0 & 1 \\
        -\frac{\alpha_{n-1}}{\alpha_n} & \frac{x-\beta_n}{\alpha_n}
    \end{pmatrix}.
\]
Spectral properties of Jacobi matrices with $N$-periodically modulated parameters depend on $\tr \frakX_0(0)$. 
The following theorem follows from \cite[Corollary 8]{SwiderskiTrojan2019} (see also \cite[Theorem 1]{PeriodicII}).
\begin{theorem} \label{thm:perI}
Suppose that Jacobi parameters $(a_n),(b_n)$ are $N$-periodically modulated and $\tr \frakX_0(0) \in (-2,2)$. Assume further that
\[
    \bigg( \frac{a_{n-1}}{a_n} \bigg),
    \bigg( \frac{b_n}{a_n} \bigg),
    \bigg( \frac{1}{a_n} \bigg) \in \calD_1^N.
\]
Then $J$ is self-adjoint if and only if 
the Carleman's condition 
\eqref{eq:Carleman} is satisfied. If that is the case, then 
\[
    \sigmaAC(J) = \RR, \quad{and} \quad \sigmaSC(J) = \emptyset, \quad{and} \quad \sigma_{\rm {p}}(J) = \emptyset.
\]
\end{theorem}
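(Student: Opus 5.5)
The plan is to analyse generalized eigenvectors through the $N$-step transfer matrices and then apply subordinacy theory. For $x\in\RR$ let $B_n(x)=\begin{pmatrix} 0 & 1\\ -\frac{a_{n-1}}{a_n} & \frac{x-b_n}{a_n}\end{pmatrix}$ be the one-step transfer matrix of $J$, and set $X_n(x)=B_{n+N-1}(x)\cdots B_n(x)$. For a fixed residue $i\in\{0,\dots,N-1\}$, consider the sequence $k\mapsto X_{kN+i}(x)$. Its entries are polynomials in the quantities $a_{n-1}/a_n$, $b_n/a_n$ and $1/a_n$ (the last multiplied by $x$). By the hypothesis and Proposition~\ref{prop:3}, $(X_{kN+i}(x))_k$ is therefore of bounded variation, uniformly for $x$ in compact sets. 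By \eqref{eq:5} it converges to $\frakX_i(0)$.

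Since $\frakX_i(0)$ is conjugate to $\frakX_0(0)$, the condition $\tr\frakX_0(0)\in(-2,2)$ makes every limit elliptic. The first step is a uniform diagonalization. For $k\ge k_0$ one writes $X_{kN+i}(x)=C_k(x)D_k(x)C_k(x)^{-1}$, where $D_k$ is diagonal with eigenvalues $\lambda_k,\overline{\lambda_k}$, and $(C_k)$ is of bounded variation and converges to an invertible matrix. A discrete Levinson-type argument then gives
\[
\prod_{k_0\le j<k}X_{jN+i}(x) = C_k\Big(\prod_j D_j\Big)C_{k_0}^{-1}\,(1+o(1))
\]
uniformly on compacts. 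Moreover, $\det X_n=a_{n-1}/a_{n+N-1}$, and $|\lambda_j|^2=\det X_{jN+i}$ because the eigenvalues are complex conjugate. Hence the products telescope and
\[
\Big\|\prod_{j<k}X_{jN+i}(x)\Big\|^2\asymp \frac{1}{a_{kN+i}},
\]
with the same lower bound for every nonzero solution. Consequently every nonzero generalized eigenvector $u$ of $Ju=xu$ satisfies $|u_{n}|^2+|u_{n+1}|^2\asymp 1/a_n$, with constants uniform for $x$ in compact sets.

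From this two-sided bound the dichotomy follows.
\begin{itemize}
\item If $\sum 1/a_n<\infty$, every solution lies in $\ell^2$. Then $J$ is in the limit circle case and is not self-adjoint; this is the converse of Theorem~\ref{Carleman}.
\item If $\sum 1/a_n=\infty$, no nonzero solution lies in $\ell^2$. Hence $\sigmaP(J)=\emptyset$, and $J$ is self-adjoint, in agreement with Theorem~\ref{Carleman}.
\end{itemize}

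For the continuous spectrum, the same estimate shows that any two linearly independent solutions $u,v$ satisfy $\sum_{n\le L}|u_n|^2\asymp\sum_{n\le L}|v_n|^2$ as $L\to\infty$. Thus no solution is subordinate at any $x\in\RR$. By Gilbert--Pearson subordinacy theory (or, more quantitatively, the Jitomirskaya--Last inequality), the spectral measure is then purely absolutely continuous on all of $\RR$ with support $\RR$. This yields $\sigmaAC(J)=\RR$ and $\sigmaSC(J)=\emptyset$. Alternatively, one could identify the density as a limit of $a_n$ times a Turán-type determinant, which is continuous and positive; that would give the same conclusion and local uniformity.

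The main obstacle is the uniform diagonalization step. The perturbation is only of bounded variation rather than summable, so the diagonalizing matrices must be chosen to vary with $k$. One must also check that the error terms are summable uniformly in $x$ on compacts, even though the $x$-dependence enters through $x/a_n$. My first attempt would be to treat $x/a_n$ as part of the $\calD_1^N$ perturbation and to apply a Levinson theorem for products of matrices of bounded variation. The uniformity needed for the subordinacy step should then follow because all estimates depend continuously on $x$.
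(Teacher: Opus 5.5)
Your argument is correct, but it follows a different route from the paper. The paper gives no argument of its own: it imports the statement from \cite[Corollary 8]{SwiderskiTrojan2019} (see also \cite[Theorem 1]{PeriodicII}). There the spectral conclusions come from uniform-on-compacts asymptotics, namely Turán-type determinants converging to a continuous positive limit, which give an explicit continuous positive density for the spectral measure.

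You instead stop at the pointwise two-sided bound $a_n\bigl(|u_{n-1}|^2+|u_n|^2\bigr)\asymp 1$ for nonzero generalized eigenvectors. From it you read off limit point versus limit circle, the absence of eigenvalues, and, via Gilbert--Pearson (Khan--Pearson) subordinacy, a purely absolutely continuous measure equivalent to Lebesgue measure on $\RR$. Subordinacy is a pointwise criterion in $x$. So the uniformity in $x$ that you flag as the main obstacle is not needed for your conclusion; it matters only for the density and continuity statements of the cited route.

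Two small corrections on the diagonalization step:
\begin{itemize}
\item Your displayed Levinson-type formula is not literally right. Levinson gives the product only up to a fixed invertible matrix on the right, which is close to, but in general different from, $C_{k_0}^{-1}$.
\item You need no Levinson theorem at all. Write
\[
\prod_{k_0\le j<k}X_{jN+i}(x)=C_{k-1}D_{k-1}E_{k-1}D_{k-2}\cdots E_{k_0+1}D_{k_0}C_{k_0}^{-1},
\qquad E_j=C_j^{-1}C_{j-1},
\]
where $\sum_j\|E_j-I\|<\infty$ because $(C_j)$ has bounded variation and converges to an invertible matrix. Since the eigenvalues of $D_j$ are complex conjugates, $\|D_j\|=\|D_j^{-1}\|^{-1}=|\lambda_j|$. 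This bounds both the product and its inverse by constants times $\prod_j|\lambda_j|$ and $\prod_j|\lambda_j|^{-1}$, which is exactly the two-sided estimate you use.
\end{itemize}

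Your route gives a short, essentially self-contained proof of precisely what Theorem~\ref{thm:perI} asserts, taking subordinacy theory as a black box. The cited route uses heavier machinery but yields more: a formula for the density, its continuity and positivity, and asymptotics of the orthonormal polynomials.
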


The following theorem is a consequence of \cite[Theorem A]{jordan2}, but its hypotheses are taken from a less general \cite[Theorem A]{jordan}.
\begin{theorem} \label{thm:perIIb}
Suppose that Jacobi parameters $(a_n),(b_n)$ are $N$-periodically modulated and $\frakX_0(0)$ is not diagonalizable\footnote{In particular, $\tr \frakX_0(0) \in \{-2,2\}$ and $\frakX_0(0)$ is not a multiple of the identity matrix.}. Assume further that
\begin{equation} 
	\label{eq:4}
    \bigg( \frac{\alpha_{n-1}}{\alpha_n} a_n - a_{n-1} \bigg),
    \bigg( \frac{\beta_{n}}{\alpha_n} a_n - b_n \bigg),
    \bigg( \frac{1}{\sqrt{a_n}} \bigg) \in \calD_1^N.
\end{equation}
Then $J$ is self-adjoint. Define $N$-periodic sequences $(s_n),(r_n)$ by
\[
    \lim_{n \to \infty} 
    \bigg| \frac{\alpha_{n-1}}{\alpha_n} a_n - a_{n-1} - s_n \bigg| = 0, 
    \quad \text{and} \quad
    \lim_{n \to \infty} 
    \bigg| \frac{\beta_{n}}{\alpha_n} a_n - b_n - r_n \bigg| = 0.
\]
Let $\varepsilon = \sign{\tr \frakX_0(0)}$ and define a polynomial
\begin{equation}
    \label{eq:3}
    \tau(x) = 
    \sum_{i=0}^{N-1} 
    \bigg(
    \frac{s_i}{\alpha_{i-1}} \big( 1-\varepsilon [\frakX_i(0)]_{1,1} \big) - 
    \frac{x+r_i}{\alpha_{i-1}} \varepsilon [\frakX_i(0)]_{2,1} 
    \bigg).
\end{equation}
Then
\[
    \sigmaAC(J) = \sigma(J) \setminus \sigmaD(J) = \cl{\tau^{-1} \big( (-\infty,0) \big)}, 
    \quad \text{and} \quad 
    \sigmaSC(J) = \emptyset,
    \quad \text{and} \quad
    \sigmaP(J) \cap \tau^{-1}\big( (-\infty,0) \big) = \emptyset.
\]
\end{theorem}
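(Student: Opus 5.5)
The plan is to obtain Theorem~\ref{thm:perIIb} by checking that its hypotheses imply those of \cite[Theorem A]{jordan2}, and then translating that theorem's conclusion into the polynomial $\tau$. I will not reprove the subordinacy analysis itself.

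\emph{Regularity and self-adjointness.} Proposition~\ref{prop:2} applied to \eqref{eq:4} gives that $(a_{n-1}/a_n)$, $(b_n/a_n)$ and $(1/a_n)$ lie in $\calD_1^N$. These are the regularity hypotheses used in \cite{jordan, jordan2}.

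For self-adjointness I use Theorem~\ref{Carleman}. Since $\frac{\alpha_{n-1}}{\alpha_n}a_n - a_{n-1}$ converges along residue classes mod $N$, it is bounded, say by $C$. Iterating over one period, the $\alpha$-ratios telescope because $\prod_{j=n}^{n+N-1}\alpha_{j-1}/\alpha_j=\alpha_{n-1}/\alpha_{n+N-1}=1$. This gives $a_{n+N}\le a_n + C'$, so $a_n\le C''(n+1)$. Hence $\sum 1/a_n=\infty$, and \eqref{eq:Carleman} holds.

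\emph{Transfer matrices and the function $\tau$.} Let $B_n(x)$ be the true one-step transfer matrix, with entries $-a_{n-1}/a_n$ and $(x-b_n)/a_n$. Write $X_n=B_{n+N-1}\cdots B_n$. Then
\[ B_n(x)=\frakB_n(0)+\frac{1}{a_n}\begin{pmatrix}0&0\\ \frac{\alpha_{n-1}}{\alpha_n}a_n-a_{n-1} & x+\frac{\beta_n}{\alpha_n}a_n-b_n\end{pmatrix}. \]
Expanding the product to first order in $1/a_n$, and using $a_{n+j}/a_n\to\alpha_{n+j}/\alpha_n$, I expect
\[ \lim_{k\to\infty} a_{kN+i+N-1}\bigl(X_{kN+i}(x)-\frakX_i(0)\bigr)=: R_i(x), \]
a matrix that is affine in $x$ and built from $s_j$, $r_j$ and the partial products of the $\frakB$'s. \cite[Theorem A]{jordan2} (with the hypotheses of \cite{jordan}) then states the following. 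Since $\frakX_i(0)$ is a nontrivial Jordan block with eigenvalue $\varepsilon$, the spectral type at $x$ is decided by the sign of a scalar quantity. This quantity is essentially the discriminant of $\frakX_i(0)+R_i(x)/a$, or equivalently the pairing of $R_i(x)$ with the generalized eigenvector of $\frakX_i(0)$. Absolute continuity with no eigenvalues and no singular continuous spectrum holds where the quantity is negative. Discreteness holds where it is positive.

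\emph{Main obstacle.} The core step is identifying this scalar with $\tau(x)$ from \eqref{eq:3}, up to a positive factor independent of $x$. My approach:
\begin{itemize}
\item Compute the first-order term as $\sum_{j}\frakX$-partial products $\times$ perturbation $\times$ partial products.
\item Take its trace-type pairing against $\frakX_i(0)-\varepsilon I$, which is nilpotent of rank one.
\item Use the cyclic conjugation $\frakX_{j+1}(0)=\frakB_j(0)\frakX_j(0)\frakB_j(0)^{-1}$ to move each summand to base point $j$. There the perturbation has nonzero entries only in row $2$, and the pairing reduces to $(1-\varepsilon[\frakX_j(0)]_{1,1})$ times the $(2,1)$ entry plus $-\varepsilon[\frakX_j(0)]_{2,1}$ times the $(2,2)$ entry.
\item Divide by $\alpha_{j-1}$ from the normalization, which yields exactly the summands of $\tau$.
\end{itemize}
The remaining bookkeeping is to check that the positive factor and the sign conventions (the role of $\varepsilon$) agree with \cite{jordan2}. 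One also needs that $\tau$ does not vanish identically, which follows from non-diagonalizability. Then the essential spectrum equals $\cl{\tau^{-1}((-\infty,0))}$, and outside this set the spectrum is discrete.
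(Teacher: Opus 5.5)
There is a genuine gap, and it sits in the hypothesis check, which is essentially the whole proof. Your overall route is the paper's: reduce to \cite[Theorem A]{jordan2}, then match $\tau$. Your Carleman argument for self-adjointness is fine. Your telescoping even gives the two-sided bound $\sup_n|a_{n+N}-a_n|<\infty$, because $a_n/\alpha_n-a_{n-1}/\alpha_{n-1}=\alpha_{n-1}^{-1}\bigl(\frac{\alpha_{n-1}}{\alpha_n}a_n-a_{n-1}\bigr)$ is bounded.

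The problem is that the conditions \eqref{eq:1'} you extract via Proposition~\ref{prop:2} are the hypotheses of Theorems~\ref{thm:perI} and~\ref{thm:perIII}, not of the Jordan block theorem. They are too weak there. A perturbation of size $\delta$ of a nontrivial Jordan block moves its eigenvalues by order $\sqrt{\delta}$, so \cite[Theorem A]{jordan2} (with $\gamma_n=a_n$) needs regularity on the scale $a_n^{-1/2}$. Concretely, the paper verifies:
\begin{itemize}
\item $\bigl(\sqrt{a_n}(\sqrt{\alpha_{n-1}/\alpha_n}-\sqrt{a_{n-1}/a_n})\bigr)\in\calD_1^N$ and $\bigl(\sqrt{a_n}(\beta_n/\alpha_n-b_n/a_n)\bigr)\in\calD_1^N$, for \cite[formula (1.2)]{jordan2};
\item $\sqrt{a_{n+N}}-\sqrt{a_n}\to0$, which is \cite[formula (1.3)]{jordan2};
\item \cite[formula (1.4)]{jordan2}, which follows directly from \eqref{eq:4}.
\end{itemize}
This is why the statement assumes \eqref{eq:4} rather than \eqref{eq:1'}. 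To close the gap, use the identity $\sqrt{a_n}\bigl(\frac{\alpha_{n-1}}{\alpha_n}-\frac{a_{n-1}}{a_n}\bigr)=a_n^{-1/2}\bigl(\frac{\alpha_{n-1}}{\alpha_n}a_n-a_{n-1}\bigr)$ and its analogue for $b_n$; the algebra property of $\calD_1^N$ then applies. Pass to square roots via $\sqrt{u}-\sqrt{v}=(u-v)/(\sqrt{u}+\sqrt{v})$ together with $(\sqrt{a_{n-1}/a_n})\in\calD_1^N$ (Proposition~\ref{prop:3}). Finally, write $\sqrt{a_{n+N}}-\sqrt{a_n}=(a_{n+N}-a_n)/(\sqrt{a_{n+N}}+\sqrt{a_n})$ and use your bounded-difference estimate.

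For $\tau$, your algebra is correct. When $\tr\frakX=2\varepsilon$ and $\det\frakX=1$, one has $\det(\frakX+\delta R)=1+\delta\tr\bigl((2\varepsilon I-\frakX)R\bigr)+O(\delta^2)$. Hence the discriminant of $\frakX_i(0)+\delta R_i(x)$ is $4\delta\tr\bigl((\frakX_i(0)-\varepsilon I)R_i(x)\bigr)+O(\delta^2)$. Your cyclic-conjugation computation makes this pairing a positive multiple of $\tau(x)$; it equals $\alpha_i\tau(x)$ if you normalize by $a_{kN+i}$.

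However, the claim that this first-order sign is the criterion of \cite{jordan2} is not bookkeeping. That criterion, \cite[formula (2.17)]{jordan2}, involves constants $\frakS$ and $\frakt$. It reduces to your quantity only when $X_n(x)-\frakX_i(0)$ has no term of order $a_n^{-1/2}$, whose square would otherwise contribute at order $a_n^{-1}$. The paper's matching step consists of two observations:
\begin{itemize}
\item $\frakS=0$, because $\sqrt{a_n}\bigl(\frac{\alpha_{n-1}}{\alpha_n}-\frac{a_{n-1}}{a_n}\bigr)=a_n^{-1/2}\bigl(\frac{\alpha_{n-1}}{\alpha_n}a_n-a_{n-1}\bigr)\to0$;
\item $\frakt=1$, because $\gamma_n=a_n$.
\end{itemize}
After that one reads off $\fraku_n=s_n(1-\varepsilon[\frakX_n(0)]_{1,1})-r_n\varepsilon[\frakX_n(0)]_{2,1}$, and \eqref{eq:3} is exactly \cite[formula (2.17)]{jordan2}. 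You need to state and use these two facts rather than defer them.
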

\begin{proof}
We are going to show that the hypotheses of \cite[Theorem A]{jordan2} are satisfied for $\gamma_n=a_n$. Notice that
\begin{align}
	\label{eq:9a}
	\sqrt{a_n} \bigg( \frac{\alpha_{n-1}}{\alpha_n} - \frac{a_{n-1}}{a_n} \bigg)
	&=
	\frac{1}{\sqrt{a_n}} \bigg( \frac{\alpha_{n-1}}{\alpha_n} a_n - a_{n-1} \bigg) \\
	\label{eq:9b}
	\sqrt{a_n} \bigg( \frac{\beta_{n}}{\alpha_n} - \frac{b_{n}}{a_n} \bigg)
	&=
	\frac{1}{\sqrt{a_n}} \bigg( \frac{\beta_{n}}{\alpha_n} a_n - b_{n} \bigg).
\end{align}
Thus, in view of \eqref{eq:4} the sequences on the left-hand sides belong to $\calD_1^N$. Next, by Proposition~\ref{prop:2} we have that \eqref{eq:1'} is satisfied. In view of \eqref{eq:5} and Proposition~\ref{prop:3} we have
\[
	\bigg( \sqrt{\frac{a_{n-1}}{a_n}} \bigg) \in \calD_1^N.
\]
Since
\[
	\sqrt{a_n} \bigg( \sqrt{\frac{\alpha_{n-1}}{\alpha_n}} - \sqrt{\frac{a_{n-1}}{a_n}} \bigg)
	=
	\sqrt{a_n} \bigg( \frac{\alpha_{n-1}}{\alpha_n} - \frac{a_{n-1}}{a_n} \bigg) 
	\bigg( \sqrt{\frac{\alpha_{n-1}}{\alpha_n}} + \sqrt{\frac{a_{n-1}}{a_n}} \bigg)^{-1}
\]
the sequence on the left-hand side belongs to $\calD_1^N$. Consequently, we have shown \cite[formula (1.2)]{jordan2}. According to \cite[formula (2.9)]{jordan} the sequence $(a_{n+N} - a_n)$ is bounded. Therefore,
\[
	\lim_{n \to \infty} (\sqrt{a_{n+N}} - \sqrt{a_{n}}) = 
	\lim_{n \to \infty} \frac{a_{n+N} - a_n}{\sqrt{a_{n+N}} + \sqrt{a_n}} = 0
\]
and we have verified that \cite[formula (1.3)]{jordan2} holds true. Finally, the condition~\eqref{eq:4} easily implies \cite[formula (1.4)]{jordan2}. Therefore, we have shown that the hypotheses of \cite[Theorem A]{jordan2} are satisfied.

It remains to compare the formula \eqref{eq:3} with \cite[formula (2.17)]{jordan2}. By \eqref{eq:9a} we have
\[
	\lim_{n \to \infty} 
	\sqrt{a_n} \bigg( \frac{\alpha_{n-1}}{\alpha_n} - \frac{a_{n-1}}{a_n} \bigg) = 0,
\]
which leads to $\frakS = 0$ (cf. \cite[formulas (2.19) and (2.6)]{jordan2}). Next, $\frakt=1$ (cf. \cite[formula (2.8)]{jordan2}). Finally, we have
\[
	\fraku_n = 
	s_n \big( 1-\varepsilon [\frakX_n(0)]_{1,1} \big) - 
    r_n \varepsilon [\frakX_n(0)]_{2,1} 
\]
(cf. \cite[formula (2.16)]{jordan2}), which leads to the equality of \eqref{eq:3} and \cite[formula (2.17)]{jordan2}.
\end{proof}

The following theorem follows from \cite[Theorem 5.3 and Remark 5.4]{Discrete}.
\begin{theorem} \label{thm:perIII}
Suppose that Jacobi parameters $(a_n),(b_n)$ are $N$-periodically modulated and $\tr \frakX_0(0) \in \RR \setminus [-2,2]$. Assume further that
\[
    \bigg( \frac{a_{n-1}}{a_n} \bigg),
    \bigg( \frac{b_n}{a_n} \bigg),
    \bigg( \frac{1}{a_n} \bigg) \in \calD_1^N.
\]
Then $J$ is self-adjoint and  
$\sigmaEss(J) =\sigma (J)\setminus 
\sigmaD(J)=\emptyset$.
\end{theorem}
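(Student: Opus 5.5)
The plan is to analyse the transfer matrices of the generalized eigenvector equation and show that, for every spectral parameter, they behave like a uniformly hyperbolic product. Write the recurrence $a_{n-1}u_{n-1}+b_nu_n+a_nu_{n+1}=xu_n$ in the form $(u_n,u_{n+1})^T=B_n(x)(u_{n-1},u_n)^T$ with
\[
B_n(x)=\begin{pmatrix} 0 & 1\\ -\frac{a_{n-1}}{a_n} & \frac{x-b_n}{a_n}\end{pmatrix},
\]
and consider the $N$-step products $X_n(x)=B_{n+N-1}(x)\cdots B_n(x)$. Since $a_n\to\infty$ and the modulation conditions \eqref{eq:5} hold, for $x$ in any compact set $K\subset\CC$ we have $B_{kN+i}(x)\to\frakB_i(0)$ uniformly, hence $X_{kN+i}(x)\to\frakX_i(0)$ as $k\to\infty$. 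All $\frakX_i(0)$ are conjugate to $\frakX_0(0)$, so they share the trace. Because $\tr\frakX_0(0)\notin[-2,2]$ and $\det\frakX_0(0)=\alpha_{-1}/\alpha_{N-1}=1$, each limit has two real eigenvalues $\lambda$ and $\lambda^{-1}$ with $|\lambda|>1$. Hence for $k$ large the matrices $X_{kN+i}(x)$ have eigenvalues $\lambda_k(x)$ and $\lambda_k(x)^{-1}$ with $|\lambda_k(x)|\ge q>1$, uniformly in $x\in K$.

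Next I will use the hypotheses $(a_{n-1}/a_n),(b_n/a_n),(1/a_n)\in\calD_1^N$. Together with Proposition~\ref{prop:3}, they give $\sum_k\sup_{x\in K}\|X_{(k+1)N+i}(x)-X_{kN+i}(x)\|<\infty$. This is the standard input for a discrete Levinson-type theorem. Diagonalize $X_{kN+i}(x)=C_k(x)\,\mathrm{diag}(\lambda_k,\lambda_k^{-1})\,C_k(x)^{-1}$ with $C_k$ of bounded variation, converging, and analytic in $x$. The telescoped products $\prod_k C_{k+1}^{-1}C_k$ then converge, so along each residue class $i$ there are two solutions, $u^\pm(x)$, with $\|(u^\pm_{kN+i},u^\pm_{kN+i+1})\|\asymp\prod_{j\le k}|\lambda_j(x)|^{\pm1}$. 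In particular, for every $x\in K$ one solution decays geometrically and is in $\ell^2$, and the other grows geometrically and is not. The decaying solution $u^-(x)$ can be normalized to depend analytically on $x$ in a neighbourhood of any real point, uniformly for $k\ge k_0(K)$; its earlier entries are obtained by running the recurrence backwards, which is polynomial in $x$.

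Self-adjointness follows at once. For $x\in\RR$ not every solution is square summable, so $J$ is in the limit point case and its closure is self-adjoint. Carleman's condition also follows directly from $(1/a_n)\in\calD_1^N$ with $a_n\to\infty$ in general only along the asymptotic picture, so I will rely on the limit-point argument rather than on Theorem~\ref{Carleman}.

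For $\sigmaEss(J)=\emptyset$, I will express the Weyl function as $m(x)=-u^-_1(x)/(a_0u^-_0(x))$, up to the normalization convention for the first row. This is a ratio of functions holomorphic on a complex neighbourhood of $\RR$, and $u^-_0\not\equiv0$ because $u^-$ is a nontrivial solution. Hence $m$ extends meromorphically across the whole real axis. The spectral measure is then supported on the discrete set of real poles, each of them an isolated simple eigenvalue. Since any Jacobi operator has simple spectrum, $\sigma(J)=\sigmaD(J)$.

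The main obstacle is the uniform Levinson step: the variation bound must hold uniformly in a complex neighbourhood of each real point, and the diagonalizing matrices must be chosen analytically in $x$. Without this, I only obtain pointwise subordinacy, which excludes continuous spectrum but not accumulation of eigenvalues. I would first try to follow the uniform diagonalization of \cite{Discrete}, namely Theorem~5.3 there with Remark~5.4. Hyperbolicity makes the eigenvalues of $X_{kN+i}(x)$ separated uniformly, so the eigenprojections are analytic and have bounded variation via Proposition~\ref{prop:3}.
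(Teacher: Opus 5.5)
Your proposal is correct and follows essentially the paper's route. The paper proves this statement only by citing \cite[Theorem 5.3 and Remark 5.4]{Discrete}. Your sketch reconstructs that uniform-asymptotics argument and defers its technical core to the same reference: Levinson-type asymptotics in the hyperbolic regime give a decaying solution analytic in the spectral parameter, hence limit-point self-adjointness (rightly not using Carleman's condition, which need not hold here) and a Weyl function meromorphic across $\RR$.

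Three minor slips:
\begin{itemize}
\item Since $\det X_{kN+i}(x)=a_{kN+i-1}/a_{(k+1)N+i-1}$ only tends to $1$, the eigenvalues are not exactly $\lambda_k^{\pm 1}$.
\item Convergence of $\prod_k C_{k+1}^{-1}C_k$ is not by itself what produces the two solutions; Levinson's theorem does, and its dichotomy condition is automatic here.
\item The denominator of $m$ is not identically zero because a zero at a non-real point would be a non-real eigenvalue of the self-adjoint $J$, not merely because $u^-$ is nontrivial.
\end{itemize}
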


\subsection{An auxiliary result}
The following proposition will be instrumental for our studies of Rabi models. 
\begin{proposition} \label{prop:1}
Let $N$ be a positive integer. Consider
\[
    a_n = \alpha_n \sqrt{(n+t)(n+s)}, \quad
    b_n = \beta_n n + \gamma_n,
\]
where $t,s>0$, $(\alpha_n),(\beta_n),(\gamma_n)$ are $N$-periodic sequences with $(\alpha_n)$ positive and $(\beta_n),(\gamma_n)$ real. Then \eqref{eq:1} holds true. Moreover, 
\begin{equation}
    \label{eq:2}
    \lim_{n \to \infty} 
    \bigg| \frac{\alpha_{n-1}}{\alpha_n} a_n - a_{n-1} - s_n \bigg| = 0, 
    \quad \text{and} \quad
    \lim_{n \to \infty} 
    \bigg| \frac{\beta_{n}}{\alpha_n} a_n - b_n - r_n \bigg| = 0
\end{equation}
holds for
\[
    s_n = \alpha_{n-1}, \quad r_n = \frac{\beta_n}{2} (t+s) - \gamma_n.
\]
\end{proposition}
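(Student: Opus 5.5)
The plan is a direct computation that relies on Proposition~\ref{prop:3} throughout. The key object is the function $f(n)=\sqrt{(n+t)(n+s)}$, expanded for large $n$. Writing $f(n)=n\sqrt{(1+t/n)(1+s/n)}$, we get
\[
    f(n) = n + \frac{t+s}{2} + \varepsilon_n, \qquad \varepsilon_n = O(n^{-1}),
\]
and, more precisely, $\varepsilon_n = h(1/n)$, where $h(y)=y^{-1}\bigl(\sqrt{(1+ty)(1+sy)}-1\bigr)-\frac{t+s}{2}$ is smooth on $[0,1]$ with $h(0)=0$. Thus $\varepsilon_n$ is a smooth (hence Lipschitz) function of $1/n$. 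Since $|1/(n+N)-1/n|\le N/n^2$ is summable, the sequence $(1/n)$ lies in $\calD_1^N$. By the last item of Proposition~\ref{prop:3}, $(\varepsilon_n)\in\calD_1^N$ and $\varepsilon_n\to0$. The same argument applied to $y\mapsto (f(1/y)\,y)^{-1/2}\sqrt{y}$, or more simply to $1/\sqrt{a_n}=\alpha_n^{-1/2}\,g(1/n)\,n^{-1/2}$ with $g$ Lipschitz, gives $(1/\sqrt{a_n})\in\calD_1^N$. Here one uses that $(n^{-1/2})\in\calD_1^N$, because $n^{-1/2}-(n+N)^{-1/2}=O(n^{-3/2})$, and that the periodic sequence $(\alpha_n^{-1/2})$ trivially lies in $\calD_1^N$; products are then handled by Proposition~\ref{prop:3}.

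For the first sequence in \eqref{eq:1}, I compute
\[
    \frac{\alpha_{n-1}}{\alpha_n}a_n - a_{n-1} = \alpha_{n-1}\bigl(f(n)-f(n-1)\bigr) = \alpha_{n-1}\bigl(1+\varepsilon_n-\varepsilon_{n-1}\bigr).
\]
The periodic sequence $(\alpha_{n-1})$ is in $\calD_1^N$, and $(\varepsilon_n-\varepsilon_{n-1})$ is in $\calD_1^N$ as a difference of a sequence and its shift. Hence the product is in $\calD_1^N$, and since $\varepsilon_n-\varepsilon_{n-1}\to0$ it tends to $\alpha_{n-1}$, i.e. $s_n=\alpha_{n-1}$.

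For the second sequence, I compute
\[
    \frac{\beta_n}{\alpha_n}a_n - b_n = \beta_n f(n) - \beta_n n - \gamma_n = \beta_n\frac{t+s}{2} - \gamma_n + \beta_n\varepsilon_n.
\]
This is a periodic sequence plus a product of a periodic sequence with $(\varepsilon_n)$, so it lies in $\calD_1^N$ and converges along residues to $r_n=\frac{\beta_n}{2}(t+s)-\gamma_n$. This establishes both \eqref{eq:1} and \eqref{eq:2}.

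The only mildly delicate step is justifying the Lipschitz and smoothness claims near $y=0$: $h$ must extend smoothly at $0$, which follows from the Taylor expansion of the square root, and the variable $1/n$ ranges in the compact interval $[0,1]$. Everything else is bookkeeping with Proposition~\ref{prop:3}.
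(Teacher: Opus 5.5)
Your argument is correct, but it takes a different route from the paper.

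The paper never expands $f(n)=\sqrt{(n+t)(n+s)}$ at infinity. Instead it notes that $f$ is increasing and concave, and then cites \cite[Proposition 10.2]{jordan} to get $(f(n)-f(n-1))$ and $(1/\sqrt{f(n)})$ in $\calD_1^1$. For the second sequence it shows that $f(x)-x$ is non-decreasing (i.e.\ $f'\ge 1$, by AM--GM) with limit $\frac{t+s}{2}$. Hence $(f(n)-n)$ is monotone and bounded, and so lies in $\calD_1^1$ by telescoping.

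You instead write $f(n)=n+\frac{t+s}{2}+h(1/n)$ with $h$ smooth on $[0,1]$ and $h(0)=0$. You then get all three memberships and both limits in \eqref{eq:2} from the Lipschitz item of Proposition~\ref{prop:3}, together with $(1/n),(n^{-1/2})\in\calD_1^N$.

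What each approach buys:
\begin{itemize}
\item Your version is self-contained, with no appeal to \cite{jordan}.
\item It is also quantitative: the two sequences in \eqref{eq:2} differ from $s_n$ and $r_n$ by $O(1/n^2)$ and $O(1/n)$ respectively, and the first error is even absolutely summable.
\item The paper's version relies only on monotonicity and concavity of the profile $f$. It therefore reuses a general lemma and does not depend on $f$ having a smooth expansion in $1/n$.
\end{itemize}

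One small slip. Your first suggested route to $(1/\sqrt{a_n})$ composes $y\mapsto \phi(y)^{-1/2}\sqrt{y}$, with $\phi(y)=\sqrt{(1+ty)(1+sy)}$, with the sequence $1/n$. This does not fit the Lipschitz item of Proposition~\ref{prop:3}, because that map is not Lipschitz at $y=0$. The factorization $\alpha_n^{-1/2}\,\phi(1/n)^{-1/2}\,n^{-1/2}$ that you give right afterwards is the one that works.
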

\begin{proof}
Consider a sequence
\[
    \tilde{a}_n = \sqrt{(n+t)(n+s)}, \quad n \geq 0.
\]
It is immediate that
\begin{equation} \label{eq:6}
    \lim_{n \to \infty} (\tilde{a}_n - \tilde{a}_{n-1}) = 1.
\end{equation}
Let us define
\[
    f(x) = \sqrt{(x+t)(x+s)}, \quad x > \max(-t,-s).
\]
Notice that 
\begin{equation} \label{eq:7}
    f'(x) = \frac{2x+s+t}{2\sqrt{(x+t)(x+s)}}, \quad
    f''(x) = - \frac{(t-s)^2}{((x+t)(x+s))^{3/2}}.
\end{equation}
Thus the function $f$ is increasing and concave (the case $t=s$ corresponds to a linear function). Since $\tilde{a}_n = f(n)$ by \cite[Proposition 10.2]{jordan} we have
\[
    (\tilde{a}_n - \tilde{a}_{n-1}), \Big( \frac{1}{\sqrt{\tilde{a}_n}} \Big) \in \calD_1^1.
\]
Thus, these sequences belong also to $\calD_1^N$. Notice 
\[
    a_n = \alpha_n \tilde{a}_n, \quad
    \frac{\alpha_{n-1}}{\alpha_n} a_n - a_{n-1} = \alpha_{n-1}(\tilde{a}_n - \tilde{a}_{n-1}).
\]
Since $\calD_1^N$ is an algebra we get
\[
    \Big( \frac{\alpha_{n-1}}{\alpha_n} a_n - a_{n-1} \Big), 
    \Big( \frac{1}{\sqrt{a_n}} \Big) \in \calD_1^N.
\]
Moreover, in view of \eqref{eq:6} we get the first equality in \eqref{eq:2}. Next, define $g(x) = f(x) - x$. In view of \eqref{eq:7} we have
\[
    g'(x) = \frac{2x+t+s}{2\sqrt{(x+t)(x+s)}} -1.
\]
By direct computations we get that $g'$ is strictly positive if and only if $t \neq s$ (if $t=s$, then $g$ is a constant function). Therefore, $g$ is non-decreasing. Moreover,
\begin{equation} \label{eq:8}
    \lim_{x \to \infty} g(x) = \frac{t+s}{2}.
\end{equation}
Notice $g(n) = \tilde{a}_n - n$. Therefore, $(\tilde{a}_n - n)$ is non-decreasing and bounded, thus it belongs to $\calD_1^1$, so also to $\calD_1^N$. Since
\[
    \frac{\beta_n}{\alpha_n} a_n - b_n = \beta_n (\tilde{a}_n - n) - \gamma_n
\]
we have that this sequence also belongs to $\calD_1^N$. In view of \eqref{eq:8} the second equality in \eqref{eq:2} follows. The proof of \eqref{eq:1} is complete.
\end{proof}

\section{Proofs of Theorems \ref{thm:21}-\ref{thm:24}} \label{sec:4}
\subsection{Proof of Theorem \ref{thm:21}} \label{sec:41}  In this section, $H$ is the Hamiltonian of the 
intensity-dependent Rabi model defined as 
the closure of $\hat H$ given in \eqref{21}. 
\par  
Let $\{ e^\nu_n \}_{(\nu ,n)\in \{ -1,1\} \times \NN_0}$ be 
 the basis of $\CC^2\otimes \ell^2(\NN_0)$ given by \eqref{enun}. Since $e^\nu_n=e^\nu \otimes e_n$ and 
 $\sigma_x e^\nu =e^{-\nu}$, 
$\sigma_z e^\nu =\nu e^\nu$, we have
$$
g\big(\sigma_x \otimes \hat a^\dagger 
(\hat N+2\kappa )^{1/2} 
\big) e_n^{\nu} =g\big(\sigma_x e^\nu \otimes 
\hat a^\dagger (\hat N+2\kappa )^{1/2} e_n 
\big)= a_n\, e^{-\nu}_{n+1}, 
$$ 
$$
g\big(\sigma_x \otimes (\hat N+2\kappa )^{1/2} \hat a 
 \big) e_n^{\nu} =g\big(\sigma_x e^\nu \otimes 
(\hat N+2\kappa )^{1/2} \hat a e_n 
\big)= a_{n-1}\, e^{-\nu}_{n-1}
$$ 
with $(a_n)$ given by \eqref{thm1}. Moreover, 
$$
 \Big( I_2 \otimes \hat N +\frac{\Delta}{2}\, \sigma_z \otimes I\Big) e_n^{\nu} =\left( {n+\nu \frac{\Delta}{2} 
 }\right) e_n^{\nu}
 $$
and, introducing 
$\mathfrak{F}^\pm :=\{ f^\pm_n\}_{n\in \NN_0}$
with 
\begin{equation} \label{41B} 
 f^\pm_n:=e_n^{\pm (-1)^n}, 
\end{equation} 
we obtain
\begin{equation} \label{41H}
\hat H f^\pm_n =a_nf^\pm_{n+1} +a_{n-1}f^\pm_{n-1} 
+ b^{\pm}_n f^\pm_n,   
\end{equation} 
where $(b_n^\pm )$ is given by \eqref{thm1} and  $a_{j}f^\pm_{j}:=0$ when $j<0$. In order to obtain \eqref{decomp1}, we consider the orthogonal decomposition 
\begin{equation} \label{12orth}
\CC^2\otimes \ell^2(\NN_0) =\mathcal H^- 
\oplus \mathcal H^+
\end{equation} 
with $\mathcal H^\pm$ defined as the closure of 
${\rm span}(\mathfrak F^\pm )$. 
Due to \eqref{41H}, $\mathcal H^\pm$ are 
invariant subspaces for $\hat H$ and 
 $J((a_{n}),(b_{n}^\pm ))$ 
is the matrix of $\hat H|_{\mathcal H^\pm}$
in the basis $\mathfrak F^\pm$. 
 \par 
To begin the analysis of $J^\pm$ defined by 
$J((a_n),(b_n^\pm ))$, we write  
$$
a_n = \alpha_n 
    \sqrt{(n+1) (n+2\kappa )} \hbox{ \, with \, } 
    \alpha_n \equiv g, 
$$ 
$$
b_n^\pm =\beta_n n +\gamma_n^\pm  
\hbox{ \, with \, } \beta_n \equiv 1, \hspace{2mm} \gamma_n^\pm = \pm (-1)^n \frac{\Delta}{2} 
$$ 
Then \eqref{eq:1} and \eqref{eq:1'} are satisfied with $N=2$ due to Proposition~\ref{prop:1} and \ref{prop:2}.  
Moreover, 
\begin{equation}
    \label{eq:100}
    {\mathfrak B}_n(0) =
    \begin{pmatrix}
        0 & 1 \\
        -1 & -\frac{1}{g} 
    \end{pmatrix} , \quad 
    \frakX_n(0) = 
    {\mathfrak B}_n(0)^2 =
    \begin{pmatrix}
        -1 & -\frac{1}{g} \\
        \frac{1}{g} & -1 + \frac{1}{g^2}
    \end{pmatrix}  
\end{equation}
and 
 \begin{equation} \label{1tr}
    \tr \frakX_0(0) = -2 + \frac{1}{g^2}.
\end{equation}
\subsubsection{Case $0<g<\frac 12$} In this case, \eqref{1tr}
gives $\tr \frakX_0(0) > 2$,  and 
Theorem~\ref{thm:perIII} ensures 
$\sigma(J^\pm )=\sigmaD(J^\pm )$. 
\subsubsection{Case $g>\frac 12$} In this case, \eqref{1tr}
gives $-2<\tr \frakX_0(0) < 2$, and 
Theorem~\ref{thm:perI} ensures 
$\sigmaAC(J^\pm )=\RR$. 
\subsubsection{Case $g=\frac 12$} In this case 
 $\tr \frakX_0(0)=2$,   $\frakX_0(0)$ is not 
diagonalizable and, using   
 Proposition~\ref{prop:1} with $s+t=2\kappa +1$, 
 we obtain \eqref{eq:2} with 
\[
    s_n =\alpha_{n-1} \equiv g=\frac{1}{2}
    \] 
and    \[     r_n^\pm =\frac{\beta_n}{2}(s+t) -\gamma_n^\pm = 
    \kappa + \frac{1}{2} \mp (-1)^n \frac{\Delta}{2}.
\]
Since  
$1-\varepsilon [\frakX_0(0)]_{1,1}=2$ and 
$[\frakX_0(0)]_{2,1}=\frac 1g =2$, we find that  \eqref{eq:3} gives
\[
    \tau^\pm (x) =4-4(2x+r_0^\pm +r_1^\pm )=
    -8(x+ \kappa ).
\]
Therefore $\tau^\pm (x) < 0 \Leftrightarrow x> -\kappa$ and the conclusion follows from Theorem~\ref{thm:perIIb}.

\subsection{Proof of Theorem \ref{thm:22}} \label{sec:42} 
 In this section, $H$ is the Hamiltonian of the  
two-photon Rabi model defined as 
the closure of $\hat H$ given in \eqref{22}. 
\par  
If $e^\nu_n :=e^\nu \otimes e_n$ as before     
and $\mu \in \{0,1\}$, then 
$$
g\big(\sigma_x \otimes (\hat a^\dagger )^2 \big) 
e_{2n+\mu}^{\nu} =a_{\mu ,n}\, 
e^{-\nu}_{2(n+1)+\mu}, 
$$ 
$$
g\big(\sigma_x \otimes \hat a^2 \big) 
e_{2n+\mu}^{\nu} =a_{\mu ,n-1} \, 
e^{-\nu}_{2(n-1)+\mu}
$$ 
with $(a_{\mu ,n} )$ given by \eqref{thm2}. Moreover, 
$$
 \Big( I_2 \otimes \hat N +\frac{\Delta}{2}\, \sigma_z \otimes I\Big) e_{2n+\mu}^{\nu} =
\left( {2n+\mu +\nu \frac{\Delta}{2} 
 }\right) e_{2n+\mu}^{\nu} 
 $$
and, introducing 
$\mathfrak{F}_\mu^\pm :=\{ 
f_{\mu ,n}^\pm \}_{n\in \NN_0}$
with 
\begin{equation} \label{42f} 
 f_{\mu ,n}^\pm :=e_{2n+\mu}^{\pm (-1)^n}, 
\end{equation} 
we obtain
\begin{equation} \label{42H}
\hat H f_{\mu ,n}^\pm =a_{\mu ,n} f_{\mu ,n+1}^\pm +a_{\mu ,n-1}f^\pm_{\mu ,n-1} 
+ b^{\pm}_{\mu ,n} f^\pm_{\mu ,n} ,  
\end{equation} 
where $(b^\pm_{\mu ,n} )$ is given by \eqref{thm2} and $a_{\mu ,j}f^\pm_{\mu ,j}:=0$ when $j<0$.
In order to obtain \eqref{decomp2}, we consider 
the orthogonal decomposition 
\begin{equation} \label{42orth}
\CC^2\otimes \ell^2(\NN_0) =\mathcal H_0^- 
\oplus \mathcal H_0^+ \oplus \mathcal H_1^- 
\oplus \mathcal H_1^+
\end{equation} 
with $\mathcal H_\mu^\pm$ defined as the closure of 
${\rm span}(\mathfrak F_\mu^\pm )$. 
Due to \eqref{42H}, $\mathcal H_\mu^\pm$ are 
invariant subspaces for $\hat H$ and 
 $J((a_{\mu ,n}),(b_{\mu ,n}^\pm ))$ 
is the matrix of $\hat H|_{\mathcal H_\mu^\pm}$
in the basis $\mathfrak F_\mu^\pm$. 
  \par 
As before, Proposition~\ref{prop:1} and \ref{prop:2} 
ensure that \eqref{eq:1},  \eqref{eq:1'} are satisfied
with $N=2$ and 
\begin{equation} \label{42alpha}
   a_{\mu ,n} = \alpha_n 
    \sqrt{(n+\tfrac{1}{2} + \tfrac{\mu}{2}) (n+1+\tfrac{\mu}{2})} \hbox{ \, with \, } 
    \alpha_n \equiv 2g,  
\end{equation}
\begin{equation} \label{beta}
 b_{\mu ,n}^\pm =\beta_n n +\gamma_{\mu ,n}^\pm  
\hbox{ \, with \, } 
\beta_n \equiv 2, \hspace{3mm} \gamma_{\mu ,n}^\pm = \mu \pm (-1)^n \frac{\Delta}{2} . 
\end{equation}
 Moreover, ${\mathfrak B}_n(0)$,  $\frakX_n(0)$ and  
${\rm tr} \frakX_n(0)$ are given by \eqref{eq:100} 
 and \eqref{1tr}, respectively. 
This gives the following three cases similar to before. 
\subsubsection{Case $0<g<\frac 12$} We have 
$\tr \frakX_0(0) > 2$ and 
Theorem~\ref{thm:perIII} ensures 
$\sigma(J_\mu^\pm )=\sigmaD(J_\mu^\pm )$. 
\subsubsection{Case $g>\frac 12$} We have 
$-2<\tr \frakX_0(0) < 2$ and 
Theorem~\ref{thm:perI} ensures 
$\sigmaAC(J_\mu^\pm )=\RR$. 
\subsubsection{Case $g=\frac 12$} 
We have $\tr \frakX_0(0)=2$ and, using   
 Proposition~\ref{prop:1} with $s+t=\mu +\frac 32$, 
 we obtain \eqref{eq:2} with
\[
    s_n =\alpha_{n-1} \equiv 2g=1 
    \] 
    and
    \[ 
r_n^\pm = \frac{\beta_n}{2}\Big( \, \mu +\frac 32 \, \Big) 
-\gamma_{\mu ,n}^\pm =\frac{3}{2} \mp (-1)^n \frac{\Delta}{2} . 
 \]
Therefore,  
 \[
\tau^\pm (x)=4-2(2x+r_0^\pm +r_1^\pm )=-4x-2. 
\] 
It is clear that $\tau^\pm (x) < 0 \Leftrightarrow x> 
-\frac 12$ and the conclusion follows from Theorem~\ref{thm:perIIb}. 
\subsection{Proof of Theorem \ref{thm:23}} 
\label{sec:43}
In this section, $H$ is the Hamiltonian of the anisotropic two-photon Rabi model defined as 
the closure of $\hat H$ given in \eqref{23}. 
\par 

Using $g$ and $g'$ given by \eqref{thm3g}, we find
$$
(g_- \sigma_- +g_+ \sigma_+ )e^{\pm 1} = g_\mp e^{\mp 1} 
=(g \mp g') e^{\mp 1} 
$$  
and, using $f_{\mu ,n}^\pm =e^{\pm (-1)^n}_{2n+\mu}$, we 
can express
$$ 
\big( (g_-\sigma_- +g_+\sigma_+) \otimes (\hat a^\dagger )^2 
\big) f_{\mu ,n}^\pm = a_{\mu ,n}^\pm f_{\mu ,n+1}^\pm , 
$$ 
$$ 
\big( (g_+\sigma_- +g_-\sigma_+) \otimes \hat a^2 
\big) f_{\mu ,n}^\pm = a_{\mu ,n-1}^\pm f_{\mu ,n-1}^\pm  
$$
with $a_{\mu ,n}^\pm$  given by 
\eqref{thm3}. Moreover, 
\begin{equation} \label{43H}
\hat H f_{\mu ,n}^\pm =a_{\mu ,n}^\pm f_{\mu ,n+1}^\pm +a_{\mu ,n-1}^\pm f^\pm_{\mu ,n-1} 
+ b^{\pm}_{\mu ,n} f^\pm_{\mu ,n}  
\end{equation} 
with  $b_{\mu ,n}^\pm$ given by 
\eqref{thm3} 
and  $a^\pm_{\mu ,j}f^\pm_{\mu ,j}:=0$ when $j<0$. 
It is clear that we obtain \eqref{decomp2}, using 
\eqref{42orth} similarly as in Section 
\ref{sec:22}. Then Proposition~\ref{prop:1} and \ref{prop:2} 
ensure that \eqref{eq:1},  \eqref{eq:1'} are satisfied
with $N=2$,  
$$
a_{\mu ,n}^\pm = \alpha_n^\pm  
    \sqrt{(n+\tfrac{1}{2} + \tfrac{\mu}{2}) (n+1+\tfrac{\mu}{2})} \hbox{ \, with \, } 
    \alpha_n^\pm =  2(g \mp (-1)^n g'), 
$$ 
and $\beta_n$, $\gamma_{\mu ,n}^\pm$ as in \eqref{beta}.  
Moreover, 
 $$
\frakB_n^\pm (0) =
    \begin{pmatrix}
        0 & 1 \\
    -\frac{\alpha^\pm_{n+1}}{\alpha^\pm_n} 
    & -\frac{2}{\alpha^\pm_n}
    \end{pmatrix}, \quad 
   \frakX_n^\pm (0) =
    \begin{pmatrix}
        \frac{-\alpha^\pm_{n+1}}{\alpha^\pm_n} & 
        -\frac{2}{\alpha^\pm_n} \\
     \frac{2}{\alpha^\pm_n} & 
\frac{4-(\alpha^\pm_n)^2}{\alpha^\pm_n \alpha^\pm_{n+1}}
    \end{pmatrix} 
 $$ 
If $\sigma \in \{ -1, 1\}$, then 
\[
    \tr \frakX^\pm_0(0) -2\sigma =
\frac{4-(\alpha^\pm_0)^2-(\alpha^\pm_1)^2}{\alpha^\pm_0 \alpha^\pm_1} - 2\sigma  =
\frac{4-(\alpha^\pm_0 +\sigma \alpha^\pm_1)^2}{\alpha^\pm_0 \alpha^\pm_1}.
\] 
\subsubsection{Case $g<\frac 12$} In this case 
$4-(\alpha^\pm_0 +\alpha^\pm_1)^2=4-16g^2>0$, therefore 
${\rm tr\,} \frakX^\pm_0(0) -2>0$, i.e.  
Theorem~\ref{thm:perIII} ensures 
$\sigma(J_\mu^\pm )=\sigmaD(J_\mu^\pm )$. 
\subsubsection{Case $|g'|>\frac 12$} 
In this case 
$4-(\alpha^\pm_0 -\alpha^\pm_1)^2=4-16g'^2<0$, therefore  
${\rm tr\,} \frakX^\pm_0(0) +2<0$, i.e. 
Theorem~\ref{thm:perIII} ensures 
$\sigma(J_\mu^\pm )=\sigmaD(J_\mu^\pm )$. 
\subsubsection{Case $|g'|<\frac 12 <g$}  In this case 
$-2<{\rm tr\,} \frakX^\pm_0(0)<2$, i.e.  
Theorem~\ref{thm:perI} ensures 
$\sigmaAC(J_\mu^\pm )=\RR$. 
\subsubsection{Case $g=\frac 12$} In this case 
${\rm tr\,} \frakX^\pm_0(0)=2$, $\frakX^\pm_0(0)$ 
is not diagonalizable and, using   
 Proposition~\ref{prop:1} with $s+t=\mu +\frac 32$, 
 we obtain \eqref{eq:2} with 
\[
s_n^\pm =\alpha^\pm_{n-1} = 2 
\big( g \pm (-1)^n g' \big) , \quad \beta_n \equiv 2 , \hspace{4mm} \gamma_{\mu ,n}^\pm = \mu \pm (-1)^n \frac{\Delta}{2}
\] 
and 
\[ 
     \frac{\beta_n}{2}\Big( \, \mu +\frac 32 \, \Big) 
    -\gamma_{\mu ,n}^\pm =
 \frac{3}{2} \mp (-1)^n \frac{\Delta}{2} =r_n^\pm . 
\]
Since $\varepsilon =1$ and 
$\alpha^\pm_0+\alpha^\pm_1 =4g=2$, we find 
$$
2-\varepsilon [\frakX^\pm_0(0)+\frakX^\pm_1(0)]_{1,1} 
=2+\frac{\alpha^\pm_1}{\alpha^\pm_0}+
\frac{\alpha^\pm_0}{\alpha^\pm_1} =
\frac{(\alpha^\pm_0+\alpha^\pm_1)^2}{\alpha^\pm_0\alpha^\pm_1} =
\frac{4}{\alpha^\pm_0\alpha^\pm_1}  .
$$
Moreover, $$
\frac{[\frakX^\pm_n(0)]_{2,1}}{\alpha^\pm_{n-1}} = \frac{2}{\alpha^\pm_0\alpha^\pm_1}
$$ 
and 
\[
    \tau^\pm (x) = \frac{4}{\alpha^\pm_0 \alpha^\pm_1} 
   -\frac{2}{\alpha^\pm_0 \alpha^\pm_1} 
   (2x+r_0^\pm +r_1^\pm ) 
   =-\frac{4x+2}{\alpha^\pm_0 \alpha^\pm_1} .
\]
Therefore $\tau^\pm (x) < 0 \Leftrightarrow x> 
-\frac 12$ and the conclusion follows from Theorem~\ref{thm:perIIb}. 
\subsubsection{Case $|g'|=\frac 12$} In this case 
${\rm tr\,} \frakX^\pm_0(0)=-2$, $\frakX^\pm_0(0)$ 
is not diagonalizable, $\varepsilon =-1$ and 
$$
2-\varepsilon [\frakX^\pm_0(0)+\frakX^\pm_1(0)]_{1,1} 
=2-\frac{\alpha^\pm_1}{\alpha^\pm_0}-
\frac{\alpha^\pm_0}{\alpha^\pm_1} =
-\frac{(\alpha^\pm_0-\alpha^\pm_1)^2}{\alpha^\pm_0\alpha^\pm_1} =
-\frac{4}{\alpha^\pm_0\alpha^\pm_1}  , 
$$
hence
\[
\tau^\pm (x) = -\frac{4}{\alpha^\pm_0 \alpha^\pm_1} 
   +\frac{2}{\alpha^\pm_0 \alpha^\pm_1} 
   (2x+r_0^\pm +r_1^\pm ) 
   =\frac{4x+2}{\alpha^\pm_0 \alpha^\pm_1} 
\]
Therefore $\tau^\pm (x) < 0 \Leftrightarrow x< 
-\frac{1}{2}$ and the conclusion follows from Theorem~\ref{thm:perIIb}. 
\subsection{Proof of Theorem \ref{thm:24}} 
\label{sec:44}
 In this section, $H$ is the Hamiltonian of the 
two-photon Rabi--Stark model defined as 
the closure of $\hat H$ given in \eqref{24}. 
\par  
Consider     
\eqref{12orth} with $\mathcal{H}_\mu^\pm$ 
generated by $\mathfrak{F}_\mu^\pm :=\{ 
f_{\mu ,n}^\pm \}_{n\in \NN_0}$ defined in 
\eqref{42f}. Then it is easy to check that \eqref{42H} holds true with 
$(a_{\mu ,n})$,  $(b_{\mu ,n}^\pm )$ given by 
\eqref{thm4}. Thus,   
 $\mathcal H_\mu^\pm$ are 
invariant subspaces for $\hat H$ and 
 $J((a_{\mu ,n}),(b_{\mu ,n}^\pm ))$ 
is the matrix of $\hat H|_{\mathcal H_\mu^\pm}$
in the basis $\mathfrak F_\mu^\pm$. 
Then Proposition~\ref{prop:1} and \ref{prop:2}  \eqref{eq:1},  \eqref{eq:1'} are satisfied
with $N=2$, $\alpha_n$ as in \eqref{42alpha} and 
$$
b_{\mu ,n}^\pm =\beta_n^\pm n +\gamma_{\mu ,n}^\pm  
\hbox{ \, with \, } 
\beta_n^\pm = 2(1\pm (-1)^n \kappa ), \hspace{2mm} 
\gamma_{\mu ,n}^\pm = (1\pm (-1)^n \kappa ) \mu 
    \pm (-1)^n \frac{\Delta}{2}
$$ 
Moreover, we have 
\[
    \frakB_{n}^\pm (0) =
    \begin{pmatrix}
        0 & 1 \\
        -1 & -\frac{1\pm (-1)^n\kappa }{g}
    \end{pmatrix} ,
\quad 
    \frakX_{n}^\pm (0) =
    \begin{pmatrix}
        -1 & \frac{-1\mp (-1)^n\kappa }{g} \\
       \frac{1\mp (-1)^n\kappa }{g} & 
        -1+\frac{1-\kappa^2 }{g^2}
    \end{pmatrix}, 
\]
hence 
\begin{equation} \label{4tr}
 {\rm tr}\frakX_{n}^\pm (0) =-2+
 \frac{1-\kappa^2 }{g^2}.   
\end{equation} 
\subsubsection{Case $|\kappa | >1$} In this case, \eqref{4tr}
gives $\tr \frakX_0^\pm (0) < -2$ and 
Theorem~\ref{thm:perIII} ensures 
$\sigma(J_\mu^\pm )=\sigmaD(J_\mu^\pm )$.
\subsubsection{Case $\kappa^2 + 4g^2<1$} In this case, \eqref{4tr}
gives $\tr \frakX_0^\pm (0)>2$ and 
Theorem~\ref{thm:perI} ensures 
$\sigma(J_\mu^\pm )=\sigmaD(J_\mu^\pm )$.
\subsubsection{Case $|\kappa | < 1$ and $\kappa^2 + 4g^2>1$} 
In this case, \eqref{4tr}
gives $|\tr \frakX_0^\pm (0)|< 2$ and 
Theorem~\ref{thm:perIII} ensures 
$\sigmaAC(J_\mu^\pm )=\RR$. 

\subsubsection{Case $\kappa^2 + 4g^2 =1$ and $0<g<\frac 12$} 
In this case, \eqref{4tr}
gives $\tr \frakX^\pm_0(0)=2$, 
$\frakX^\pm_0$ is not diagonalizable, and using   
 Proposition~\ref{prop:1} with $s+t=\mu +\frac 32$, 
 we obtain \eqref{eq:2} with 
\[
s_n =\alpha_{n-1} \equiv 2g  \] 
and 
\[
 \frac{\beta^\pm_n}{2}\Big( \, \mu +\frac 32 \, \Big) 
    -\gamma_{\mu ,n}^\pm = \frac{3}{2} (1 \pm (-1)^n \kappa) \mp (-1)^n \frac{\Delta}{2} =r_n^\pm .
\] 
 Then $1-\varepsilon [\frakX^\pm_n(0)]_{1,1}=2$ and 
we obtain 
\[ 
\tau^\pm (x) =-\frac{x}{g^2} + 4 - \frac{3(1-\kappa^2)}{2g^2} - \frac{\kappa \Delta}{2 g^2}
\]
and $\tau^\pm (x) < 0 \Leftrightarrow x> 
\frac {\kappa^2 -1- \kappa \Delta}{2}$ and the conclusion follows from Theorem~\ref{thm:perIIb}. 
\subsubsection{Case $\kappa =\pm 1$} In this case, \eqref{4tr} gives $\tr \frakX^\pm_0(0)=-2$ and  
$1-\varepsilon [\frakX^\pm_n(0)]_{1,1}=0$, hence
\[ 
    \tau^\pm (x) = \frac{x}{g^2} + \frac{\kappa \Delta}{2g^2}
\]
Therefore $\tau^\pm (x) < 0 \Leftrightarrow x < 
- \frac {\kappa \Delta}{2}$ and the conclusion follows from Theorem~\ref{thm:perIIb}. 
\section{Conclusions} \label{sec:6} 
In this paper, we have investigated spectral transitions for Rabi models which are unitarily similar to direct sums of Jacobi operators. We have presented an analysis of the spectrum of 
Jacobi operators based on the subordinacy theory developed 
in \cite{PeriodicII, SwiderskiTrojan2019, Discrete, jordan, jordan2}. In particular, new results have been given for the intensity-dependent Rabi model. We have proved 
that the spectral transition in the intensity-dependent Rabi model is similar to that in the two-photon model; the only difference is that the critical coupling gives the essential spectrum, which is a half-line depending on the additional parameter $\kappa$ of the model.\, 
Concerning the two-photon anisotropic Rabi model, we have  proved that there are two cases of critical coupling: 
the well-known case of the essential spectrum  $\sigmaEss (H)=[-\frac 12, \infty )$ and the case of the essential spectrum 
$\sigmaEss (H)=(-\infty ,-\frac 12 ]$, which 
seems to be a new result. 
Concerning the two-photon Rabi-Stark model, 
we have proved that there are two cases of critical coupling: the case $4g^2+\kappa^2=1$ with 
$\kappa \in (-1,1)$ and the case $\kappa =\pm 1$. 
The essential spectrum is formed 
by a half-line bounded below 
in the first case and by a half-line bounded above in the second case.  
We note that $\inf \sigmaEss (H)$ in the first case and $\sup \sigmaEss (H)$ in the second case depend on the parameters of the model. Our result in 
the case $\kappa =\pm 1$ seems to be new for 
the two-photon Rabi-Stark, but an analogical case  
 was considered for the one-photon Rabi-Stark. 
 \par 
We have also proved that 
there is no singular spectrum. Moreover, we 
have proved that there is no 
eigenvalue in the interior of the continuous spectrum for all models considered. This fact 
is particularly 
interesting for the two-photon Rabi-Stark model 
in the context of the work \cite{Braak-Stark}, 
investigating the question of existence of "bound states embedded in the continuous spectrum" for the one-photon Rabi-Stark model. \par 
 The subordinacy theory is a powerful 
tool to locate the continuous spectrum 
and to ensure absence of eigenvalues in the interior of the continuous 
spectrum. On the other hand, it gives 
little information on the discrete spectrum. 
For this reason, our descriptions of spectral 
transitions are essentially limited to the statements about absence or presence of 
the continuous spectrum. In particular, 
this approach cannot treat superradiant problems 
or phase transitions connected with the behaviour 
of the ground state. 
We note that numerical analysis based on a truncated 
Hilbert space does not work well if the model 
approaches the spectral transition.  An analysis of 
the associated $G$-functions 
is the only method that can detect 
the first-order phase transition 
in the anisotropic two-photon Rabi 
model (see \cite{Xie-Stark20}) and 
the second-order phase transition 
in the 
two-photon Rabi-Stark model 
(see \cite{Li_25}). However, 
there are perspectives to 
develop a subordinacy theory 
of block Jacobi matrices 
that could be applied to the mixed Rabi model in order to 
describe phase transitions 
considered in quantum metrology (see \cite{Ying_22}).
\vspace{2mm} 
\begin{bibliography}{rabi}
\bibliographystyle{amsplain}

\providecommand{\bysame}{\leavevmode\hbox to3em{\hrulefill}\thinspace}
\providecommand{\MR}{\relax\ifhmode\unskip\space\fi MR }
\providecommand{\MRhref}[2]{%
  \href{http://www.ams.org/mathscinet-getitem?mr=#1}{#2}
}
\providecommand{\href}[2]{#2}
\begin{thebibliography}{10}

\bibitem{Armenta}
R.~J. Armenta~Rico, F.~H. Maldonado-Villamizar, and B.~M. Rodriguez-Lara,
  \emph{Spectral collapse in the two-photon quantum {R}abi model}, Phys. Rev. A
  \textbf{101} (2020), 063825.

\bibitem{Ber}
P.~Bertet, S.~Osnaghi, P.~Milman, A.~Auffeves, P.~Maioli, M.~Brune, J.~M.
  Raimond, and S.~Haroche, \emph{Generating and {P}robing a {T}wo-{P}hoton
  {F}ock {S}tate with a {S}ingle {A}tom in a {C}avity}, Phys. Rev. Lett.
  \textbf{88} (2002), 143601.

\bibitem{Braak:det}
D.~Braak, \emph{Spectral {D}eterminant of the {T}wo-{P}hoton {Q}uantum {R}abi
  {M}odel}, Annalen der Physik \textbf{536} (2024), no.~6, 2200519.

\bibitem{Braak-Stark}
D.~Braak, L.~Cong, H.-P. Eckle, H.~Johannesson, and E.~K. Twyeffort,
  \emph{Spectral continuum in the {R}abi-{S}tark model}, J. Opt. Soc. Am. B
  \textbf{41} (2024), no.~8, C97--C111.

\bibitem{Bru}
M.~Brune, F.~M. Schmidt-Kaler, A.~Maali, J.~Dreyer, E.~Hagley, J.~M. Raimond,
  and S.~Haroche, \emph{Quantum {R}abi {O}scillation: {A} {D}irect {T}est of
  {F}ield {Q}uantization in a {C}avity}, Phys. Rev. Letters \textbf{76} (1996),
  no.~11, 1800--1803.

\bibitem{Chen-Stark}
X.-Y. Chen, Y-F. Xie, and Q.-H. Chen, \emph{Quantum criticality of the
  {R}abi-{S}tark model at finite frequency ratios}, Phys. Rev. A \textbf{102}
  (2020), 063721.

\bibitem{Polaron}
L.~Cong, X.-M. Sun, M.~Liu, Z.-J. Ying, and H.-G. Luo, \emph{Polaron picture of
  the two-photon quantum {R}abi model}, Phys. Rev. A \textbf{99} (2019),
  013815.

\bibitem{Cui}
S.~Cui, J.-P. Cao, H.~Fan, and L.~Amico, \emph{Exact analysis of the spectral
  properties of the anisotropic two-bosons {R}abi model}, Journal of Physics A:
  Mathematical and Theoretical \textbf{50} (2017), no.~20, 204001.

\bibitem{Va}
E.~Del~Valle, S.~Zippilli, F.~P. Laussy, A.~Gonzalez-Tudela, G.~Morigi, and
  C.~Tejedor, \emph{Two-photon lasing by a single quantum dot in a high-{Q}
  microcavity}, Phys. Rev.B \textbf{81} (2010), 035302.

\bibitem{Duan_2022}
L.~Duan, \emph{Unified approach to the nonlinear {R}abi models}, New Journal of
  Physics \textbf{24} (2022), no.~8, 083045.

\bibitem{Duan2015}
L.~Duan, S.~He, D.~Braak, and Q.-H. Chen, \emph{Solution of the two-mode
  quantum {R}abi model using extended squeezed states}, {EPL} (Europhysics
  Letters) \textbf{112} (2015), no.~3, 34003.

\bibitem{Duan2016}
L.~Duan, Y.-F. Xie, D.~Braak, and Q.-H. Chen, \emph{Two-photon {R}abi model:
  analytic solutions and spectral collapse}, J. Phys. A \textbf{49} (2016),
  no.~46, 464002, 13.

\bibitem{mixed}
L.~Duan, Y.-F. Xie, and Q.-H. Chen, \emph{The mixed quantum {R}abi model},
  Scientific Reports \textbf{9} (2019), no.~1, 18353.

\bibitem{Johan}
H.-P. Eckle and H.~Johannesson, \emph{A generalization of the quantum {R}abi
  model: exact solution and spectral structure}, Journal of Physics A:
  Mathematical and Theoretical \textbf{50} (2017), no.~29, 294004.

\bibitem{Felicetti_2019}
S.~Felicetti and A.~Boité, \emph{Observing the spectral collapse of two-photon
  interaction models}, 11th Italian Quantum Information Science conference
  (IQIS2018), IQIS 2018, MDPI, 2019, p.~41.

\bibitem{Feli1}
S.~Felicetti, M.~J. Hwang, and A.~L. Boité, \emph{Ultrastrong coupling regime
  of non-dipolar light-matter interactions}, Phys. Rev. A \textbf{98} (2018),
  053859.

\bibitem{Feli}
S.~Felicetti, J.~S. Pedernales, I.~L. Egusquiza, G.~Romero, L.~Lamata,
  D.~Braak, and E.~Solano, \emph{Spectral collapse via two-phonon interactions
  in trapped ions}, Phys. Rev. A \textbf{92} (2015), 033817.

\bibitem{Feli2}
S.~Felicetti, D.~Z. Rossatto, E.~Rico, E.~Solano, and Forn-Díaz P.,
  \emph{Two-photon quantum {R}abi model with superconducting circuits}, Phys.
  Rev. A \textbf{97} (2018), 013851.

\bibitem{G}
C.~Gerry, \emph{Two-photon {J}aynes--{C}ummings model interacting with the
  squeezed vacuum}, Phys. Rev. A (3) \textbf{37} (1988), no.~7, 2683--2686.

\bibitem{Hammani}
M.~Hammani, Z.~Sakhi, and M.~Bennai, \emph{On the two-photon quantum {R}abi
  model at the critical coupling strength}, Optical and Quantum Electronics
  \textbf{56} (2023), no.~1, 102.

\bibitem{JANAS}
J.~Janas and S.~Naboko, \emph{Spectral analysis of selfadjoint jacobi matrices
  with periodically modulated entries}, Journal of Functional Analysis
  \textbf{191} (2002), no.~2, 318--342.

\bibitem{Li-ani}
J.~Li, D.~Braak, and Q.-H. Chen, \emph{Critical spectrum of the anisotropic
  two-photon quantum {R}abi model}, Phys. Rev. A \textbf{111} (2025), 043706.

\bibitem{Li-Stark}
J.~Li and Q.-H. Chen, \emph{Two-photon {R}abi-{S}tark model}, Journal of
  Physics A: Mathematical and Theoretical \textbf{53} (2020), 315301.

\bibitem{Li_25}
J.~Li and Q.‐H. Chen, \emph{Critical spectrum and quantum criticality in the
  two‐photon {R}abi–{S}tark model}, Advanced Quantum Technologies
  \textbf{9} (2025), no.~3.

\bibitem{Li_26}
J.~Li, J.-L. Wang, Chen Q.-H., and Lin H.-Q., \emph{Quantum criticality from
  spectral collapse in the two-photon {R}abi model}, 2026.

\bibitem{Lo!}
C.~F. Lo, \emph{Demystifying the spectral collapse in two-photon {R}abi model},
  Scientific Reports \textbf{10} (2020), no.~1, 14792.

\bibitem{Lo2020'}
\bysame, \emph{Manipulating the spectral collapse in two-photon {R}abi model},
  Scientific Reports \textbf{10} (2020), no.~1, 18761.

\bibitem{Lo2021'}
\bysame, \emph{Deciphering the spectral collapse in two-photon {R}abi model},
  Scientific Reports \textbf{11} (2021), no.~1, 10647.

\bibitem{Lo-ani}
\bysame, \emph{Spectral collapse in anisotropic two-photon {R}abi model},
  Scientific Reports \textbf{11} (2021), 12401.

\bibitem{Lo2021'''}
\bysame, \emph{Spectral collapse in multiqubit two-photon {R}abi model},
  Scientific Reports \textbf{11} (2021), no.~1, 5409.

\bibitem{Lo2021''}
\bysame, \emph{Spectral collapse in two-mode two-photon {R}abi model}, Physica
  A: Statistical Mechanics and its Applications \textbf{573} (2021), 125921.

\bibitem{Lo2022}
\bysame, \emph{Spectral collapse in mixed {R}abi model}, Physica A: Statistical
  Mechanics and its Applications \textbf{603} (2022), 127678.

\bibitem{Lupo}
E.~Lupo, A.~Napoli, A.~Messina, E.~Solano, and Í.~L. Egusquiza, \emph{A
  continued fraction based approach for the {T}wo-photon {Q}uantum {R}abi
  {M}odel}, Scientific Reports \textbf{9} (2019), no.~1, 4156.

\bibitem{Maciej-Stark}
A.~J. Maciejewski, M.~Przybylska, and T.~Stachowiak, \emph{An exactly solvable
  system from quantum optics}, Physics Letters A \textbf{379} (2015),
  no.~24-25, 1503–1509.

\bibitem{Maciej}
A.~J. Maciejewski and T.~Stachowiak, \emph{A novel approach to the spectral
  problem in the two photon rabi model}, Journal of Physics A: Mathematical and
  Theoretical, \textbf{50} (2017), 244003.

\bibitem{NG}
K.~M. Ng, C.~F. Lo, and K.L. Liu, \emph{Exact eigenstates of the two-photon
  {J}aynes-{C}ummings model with the counter-rotating term}, The European
  Physical Journal D - Atomic, Molecular, Optical and Plasma Physics \textbf{6}
  (1999), no.~1, 119--126.

\bibitem{NG2000463}
K.M. Ng, C.~F. Lo, and K.~L. Liu, \emph{Exact eigenstates of the
  intensity-dependent {J}aynes–{C}ummings model with the counter-rotating
  term}, Physica A \textbf{275} (2000), no.~3, 463--474.

\bibitem{Ota}
Y.~Ota, S.~Iwamoto, N.~Kumagai, and Y.~Arakawa, \emph{Spontaneous
  {T}wo-{P}hoton {E}mission from a {S}ingle {Q}uantum {D}ot}, Physical Review
  Letters \textbf{107} (2011), no.~23, 233602.

\bibitem{Penna}
V.~Penna, F.~A. Raffa, and R.~Franzosi, \emph{Algebraic properties and spectral
  collapse in nonlinear quantum {R}abi models}, J. Phys. A: Math. Theor.
  \textbf{51} (2018), 045301.

\bibitem{Puebla}
R.~Puebla, M.-J. Hwang, J.~Casanova, and M.~B. Plenio, \emph{Protected
  ultrastrong coupling regime of the two-photon quantum {R}abi model with
  trapped ions}, Physical Review A \textbf{95} (2017), no.~6, 063844.

\bibitem{rabi1936process}
I.~I. Rabi, \emph{On the process of space quantization}, Physical Review
  \textbf{49} (1936), no.~4, 324.

\bibitem{rabi1937space}
\bysame, \emph{Space quantization in a gyrating magnetic field}, Physical
  Review \textbf{51} (1937), no.~8, 652.

\bibitem{Rodriguez-Lara:14}
B.~M. Rodr\'{i}guez-Lara, \emph{Intensity-dependent quantum {R}abi model:
  spectrum, supersymmetric partner, and optical simulation}, J. Opt. Soc. Am. B
  \textbf{31} (2014), no.~7, 1719--1722.

\bibitem{Schm}
K.~Schm\"{u}dgen, \emph{Unbounded self-adjoint operators on hilbert space},
  Graduate Texts in Mathematics, vol. 265, Springer Netherlands, 2012.

\bibitem{Schmudgen2017}
\bysame, \emph{The moment problem}, Graduate Texts in Mathematics, vol. 277,
  Springer, Cham, 2017.

\bibitem{Stufler}
S.~Stufler, P.~Machnikowski, P.~Ester, M.~Bichler, V.~M. Axt, T.~Kuhn, and
  A.~Zrenner, \emph{Two-photon {R}abi oscillations in a single
  $\mathrm{{I}n}_{x}
  \mathrm{{G}a}_{1\ensuremath{-}x}\mathrm{As}/\mathrm{Ga}\mathrm{As}$ quantum
  dot}, Phys. Rev. B \textbf{73} (2006), 125304.

\bibitem{PeriodicII}
G.~\'{S}widerski, \emph{Periodic perturbations of unbounded {J}acobi matrices
  {II}: {F}ormulas for density}, J. Approx. Theory \textbf{216} (2017), 67--85.

\bibitem{SwiderskiTrojan2019}
G.~\'{S}widerski and B.~Trojan, \emph{Asymptotics of orthogonal polynomials
  with slowly oscillating recurrence coefficients}, J. Funct. Anal.
  \textbf{278} (2020), no.~3, 108326, 55.

\bibitem{Discrete}
\bysame, \emph{About essential spectra of unbounded {J}acobi matrices}, J.
  Approx. Theory \textbf{278} (2022), Paper No. 105746, 47.

\bibitem{jordan2}
\bysame, \emph{Orthogonal polynomials with periodically modulated recurrence
  coefficients in the {J}ordan block case {II}}, Constr. Approx. \textbf{58}
  (2023), no.~3, 615--686.

\bibitem{jordan}
\bysame, \emph{Orthogonal polynomials with periodically modulated recurrence
  coefficients in the {J}ordan block case}, Ann. Inst. Fourier (Grenoble)
  \textbf{74} (2024), no.~4, 1521--1601.

\bibitem{Teschl}
G.~Teschl, \emph{Jacobi operators and completely integrable nonlinear
  lattices}, Math. Surv. Monogr., vol.~72, Providence, RI: American
  Mathematical Society, 2000 (English).

\bibitem{xie2017quantum}
Q.~Xie, H.~Zhong, M.~T. Batchelor, and C.~Lee, \emph{The quantum {R}abi model:
  solution and dynamics}, Journal of Physics A: Mathematical and Theoretical
  \textbf{50} (2017), no.~11, 113001.

\bibitem{Xie-Chen}
Y.-F. Xie and Q.-H. Chen, \emph{Exact solutions to the quantum {R}abi-{S}tark
  model within tunable coherent states}, Communications in Theoretical Physics
  \textbf{71} (2019), no.~5, 623.

\bibitem{Xie-Stark20}
Y.-F. Xie, X.-Y. Chen, X.-F. Dong, and Q.-H. Chen, \emph{First-order and
  continuous quantum phase transitions in the anisotropic quantum
  {R}abi-{S}tark model}, Phys. Rev. A \textbf{101} (2020), 053803.

\bibitem{XieStark}
Y.-F. Xie, L.~Duan, and Q.-H. Chen, \emph{Quantum {R}abi–{S}tark model:
  solutions and exotic energy spectra}, Journal of Physics A: Mathematical and
  Theoretical \textbf{52} (2019), no.~24, 245304.

\bibitem{Yan}
Z.~Yan, J.~Cheng, F.~Qiu, R.~Liu, W.~Zhao, and J.~Ma, \emph{Analytical solution
  and spectral structure of the two-photon anisotropic {R}abi-{S}tark model},
  Phys. Scr. \textbf{99} (2024), 075105.

\bibitem{Ying}
Z.-J. Ying, L.~Cong, and X.-M. Sun, \emph{Quantum phase transition and
  spontaneous symmetry breaking in a nonlinear quantum {R}abi model}, Journal
  of Physics A: Mathematical and Theoretical \textbf{53} (2020), no.~34,
  345301.

\bibitem{Ying_22}
Z.-J. Ying, S.~Felicetti, G.~Liu, and D.~Braak, \emph{Critical quantum
  metrology in the non-linear quantum {R}abi model}, Entropy \textbf{24}
  (2022), no.~8, 1015.

\bibitem{Ying_25}
Z.‐J. Ying, H.‐H. Han, B.‐J. Li, S.~Felicetti, and D.~Braak,
  \emph{Critical quantum metrology in a stabilized two‐photon {R}abi model},
  Advanced Quantum Technologies \textbf{8} (2025), no.~11.

\end{thebibliography}
\end{bibliography}

\end{document}